%% file: robust-population-protocols.tex
\documentclass[a4paper,UKenglish,cleveref, autoref, authorcolumns, thm-restate]{lipics-v2021}
\hideLIPIcs  %uncomment to remove references to LIPIcs series (logo, DOI, ...), e.g. when preparing a pre-final version to be uploaded to arXiv or another public repository

\title{Monadic Presburger Predicates have Robust Population Protocols}

\author{Philipp Czerner}{Technical University of Munich, Germany}{philipp.czerner@in.tum.de}{https://orcid.org/0000-0002-1786-9592}{}

\author{Javier Esparza}{Technical University of Munich, Germany}{javier.esparza@itum.de}{https://orcid.org/0000-0001-9862-4919}{}

\author{Vincent Fischer}{Technical University of Munich, Germany}{vincent.fischer@tum.de}{https://orcid.org/0009-0009-3071-0736}{Funded by the Research Training Group 2428 ``ConVeY'' of the German Research Council.}

\author{Roland Guttenberg}{Technical University of Munich, Germany}{roland.guttenberg@tum.de}{https://orcid.org/0000-0001-6140-6707}{}

\author{Julian Pins}{Technical University of Munich, Germany}{Julian.Pins@tum.de}{}{}

\author{Simon Reilich}{Technical University of Munich, Germany}{simonreilich@t-online.de}{}{}

\authorrunning{P. Czerner, J. Esparza, V. Fischer, R. Guttenberg, J. Pins, and S. Reilich}

\Copyright{Philipp Czerner, Javier Esparza, Vincent Fischer, Roland Guttenberg, Julian Pins, and Simon Reilich}

\ccsdesc[500]{Theory of computation~Distributed computing models}

\keywords{Population protocols, fault-tolerance, state complexity}

\category{} %optional, e.g. invited paper

\relatedversion{} %optional, e.g. full version hosted on arXiv, HAL, or other respository/website
\funding{This work was partially supported by project 551606821 ``Verification and Synthesis of Dynamic Network Algorithms'' of the German Research Council.}

\nolinenumbers %uncomment to disable line numbering

\usepackage{mathtools}
\usepackage{amsthm}
\usepackage{bm}
\usepackage[bb=boondox]{mathalfa}
\usepackage{stmaryrd}

\input{macros.tex}

\begin{document}

\maketitle

\begin{abstract}
Population protocols are a model of distributed computation in which a collection of indistinguishable finite-state agents interact randomly in pairs to decide a predicate of their initial configuration. The agents decide by achieving a stable consensus on whether the predicate holds or not. It is known that population protocols can decide exactly the predicates expressible in Presburger arithmetic.

Recently, Lossin et al. have introduced a notion of protocol robustness against adversarial crash failures. They show that all \emph{atomic} Presburger predicates can be decided by robust protocols, and ask whether the same holds for \emph{every} Presburger predicate. We make progress towards settling this question by proving that all predicates expressible in monadic Presburger arithmetic have robust protocols. In addition, we analyze the cost of robustness in terms of state complexity. We study the ratio between the number of states of the smallest robust protocol for a given predicate and the smallest protocol for it. We show that the cost of robustness is at least double exponential in the size of the predicate, and prove that the robust protocols by Lossin \etal for  threshold predicates $x \geq k$ have optimal state complexity.
\end{abstract}

\input{sec-intro.tex}
\input{sec-prelims.tex}
\input{sec-strongly-robust-protocols.tex}

%\input{sec-optimal-tower.tex}
\input{sec-lower-bound.tex}
\input{sec-conclusion.tex}

\bibliography{robust-population-protocols}

\end{document}

%% file: macros.tex
\newcommand{\etal}{\textit{{et al. }}}

\newcommand{\N}{\mathbb{N}}
\newcommand{\Z}{\mathbb{Z}}
\newcommand{\Prot}{P}
\newcommand{\Config}[1]{#1}
\newcommand{\Cfg}{\Config{C}}
\newcommand{\Conf}[1]{\Config{{#1}}}
\newcommand{\Iconf}[1]{\Config{{#1}}}
\newcommand{\ICfg}{\Config{A}}

\def \ifemptycheck#1{\def\temp{#1} \ifx\temp\empty }
\newcommand{\emptymultiset}{\emptyset}
\def \multiset#1{ \ifemptycheck{#1} \emptymultiset \else \Lbag#1\Rbag \fi }
\newcommand{\Abs}[1]{\mathopen|#1\mathclose|}
\newcommand{\Support}[1]{\left\llbracket#1\right\rrbracket}
\newcommand{\pre}{\textit{pre}}
\newcommand{\post}{\textit{post}}

\newcommand{\step}{\rightarrow}
\newcommand{\steps}{\xrightarrow{\ast}}
\newcommand{\snipe}{\hookrightarrow}
\newcommand{\snipereach}[1]{\xhookrightarrow{\ast}{{#1}}}

\DeclarePairedDelimiterX{\set}[1]{\lbrace}{\rbrace}{#1}
\DeclarePairedDelimiterX{\setc}[2]{\lbrace}{\rbrace}{#1 \;\delimsize\vert\; #2}
\DeclarePairedDelimiter{\supp}{\llbracket}{\rrbracket}

\DeclareMathOperator*{\pl}{pl}
\DeclareMathOperator*{\lcm}{lcm}

\newcommand{\Copies}{c}
\newcommand{\Height}{T}
\newcommand{\Values}{V}
\newcommand{\VValues}{\overrightarrow{\Values}}
\newcommand{\values}{v}
\newcommand{\vvalues}{\vec{\values}}
\newcommand{\Outputs}{\Omega}
\newcommand{\VOutputs}{\overrightarrow{\Omega}}
\newcommand{\outputs}{\omega}
\newcommand{\voutputs}{\vec{\outputs}}

\newcommand{\Trans}[1]{\tag*{$\langle$\textsc{#1}$\rangle$}}
\newcommand{\reconcile}{\sqcap}

\makeatletter
\newlength{\proto@lw}
\newenvironment{protocol}[1]{%
    \par\smallskip%
    \noindent\textbf{Protocol }\textsc{#1}\textbf{.}\nopagebreak%
    \protected@edef\@currentlabel{\textsc{#1}}%
    \begin{list}{}{%
        \setlength{\leftmargin}{0pt}%
        \setlength{\labelwidth}{\proto@lw}%
        \setlength{\labelsep}{0pt}%
        \setlength{\itemindent}{\proto@lw}%
        \setlength{\itemsep}{2pt}%
        \setlength{\parsep}{0pt}%
        \setlength{\topsep}{\smallskipamount}%
    }%
    \setlength{\abovedisplayskip}{4pt plus 1pt minus 1pt}%
    \setlength{\belowdisplayskip}{4pt plus 1pt minus 1pt}%
    \setlength{\abovedisplayshortskip}{2pt plus 1pt}%
    \setlength{\belowdisplayshortskip}{2pt plus 1pt}%
    \newcommand{\HeadSingle}[1]{%
        \item[{\makebox[\proto@lw][l]{\textbf{\Title.}}}]##1%
    }%
    \newcommand{\Head}{\HeadSingle}%
    \newcommand{\States}{\def\Title{States}\Head}%
    \newcommand{\Input}{\def\Title{Input}\Head}%
    \newcommand{\Output}{\def\Title{Output}\Head}%
    \newcommand{\Transitions}{\def\Title{Transitions}\Head}%
}{%
    \end{list}%
}

\makeatother

%% file: sec-intro.tex
\section{Introduction}
Population protocols are a much studied model of distributed computation in which a collection of indistinguishable finite-state agents interact randomly in pairs with the goal of deciding a property of their initial configuration \cite{AngluinADFP06,AspnesR09,AlistarhG18,ElsasserR18}. The decision is taken by stable consensus: Eventually all agents agree on whether the property holds or not, and never change their mind again.  We say that the protocol (stably) \emph{converges} to ``yes'' or ``no''.

Since agents are indistinguishable, a configuration of a protocol can be modeled as a mapping assigning to each state $q$ the number of agents currently in $q$.  
Given a protocol with initial states $q_1, \ldots, q_n$, a property of the protocol can be modeled as a predicate $P(x_1, \ldots, x_n) \colon \mathbb{N}^n \to \{0,1\}$: The initial configurations satisfying the property are those mapped to $1$ by $P$. It was shown in \cite{AngluinAER07} that population protocols can decide exactly the Presburger predicates, that is, the predicates expressible in the theory $\textit{Th}(\mathbb{N}, +, <, 0, 1)$.

Population protocols have been used to model ad hoc networks of tiny sensors, and chemical reaction networks, in which agents are molecules and interactions are chemical reactions.  In both settings the number of agents can be very large, and so failures are bound to occur. This raises the question of which fragment of the Presburger predicates can be decided \emph{robustly}, even in the presence of failures. In this paper we consider \emph{adversarial crash failures} (called just \emph{crashes} or \emph{failures} in the following) that cause an agent to stop interacting with other agents. Adversarial means that the adversary can choose which agent crashes and at which moment during the execution of the protocol.

Robust decision under crash failures was first studied by Delporte-Gallet \etal \cite{Delporte-GalletFGR06}. Loosely speaking, they showed that all Presburger predicates can be decided robustly under the assumption that the number of failures is bounded by a constant, independent of the number of agents. 
%More precisely, given an upper bound $b$ on the number of failures, the paper shows how to transform a protocol deciding a property into another one that decides the property under the assumption that at most $b$ agents fail. This approach had the problem that the number of states of the new protocol grows exponentially in $b$. making the construction very difficult to implement when the number of agents--- and so the number of failures---is very large, for example when agents model molecules.  
In \cite{LossinCEGP25}, Lossin \etal study if this assumption can be lifted. They propose a stronger notion of robustness. To give the intuition, consider the Presburger predicate $x \geq k$, corresponding to the property ``the initial configuration contains at least $k$ agents''.  A protocol deciding $x \geq k$ is robust if for every configuration with $k'>k$ agents the following holds: if at most $k'-k$ agents fail, the protocol converges to ``yes''. Observe that the number of failures a robust protocol must tolerate is no longer constant. Further, it is the best possible number. Indeed, assume that more than $k'-k$ agents fail. If the adversary makes them fail at the start of the protocol, before they interact with any other agent, the ``survivors'' do not even know that these agents exist, and so they cannot possibly converge to ``yes''.  

Lossin \etal generalize this definition of robustness to arbitrary predicates. A protocol robustly decides a predicate $P(x_1, \ldots, x_n)$ if for every initial configuration $a=(a_1, \ldots, a_n) \in \mathbb{N}^n$ and every initial configuration $b \leq a$ obtained by letting $a-b$ agents crash before the protocol starts, the following holds: if $P(a)=P(b)$, then the protocol converges to $P(a)$ even if the crash times of the agents of $a-b$ are chosen by the adversary. They ask the question whether every Presburger predicate can be robustly decided. They exhibit robust protocols for \emph{threshold} and \emph{modulo} predicates of the form $\sum_{i=1} a_i x_i \leq b$ for some $a_1, \ldots, a_n, b \in \mathbb{Z}$, and $\sum_{i=1} a_i x_i \leq b \pmod{m}$ for some $a_1, \ldots, a_n \in \mathbb{Z}$ and $b, m \in \mathbb{N}$ with $b < m$, respectively. However,
whether every Presburger predicate can be robustly decided remains open.

%Since every Presburger predicate is equivalent to a Boolean combination of threshold and modulo predicates (see e.g. \cite{Haase18}), every Presburger predicate can be robustly decided if{}f the robustly decidable predicates are closed under Boolean operations. However, these questions are open.

Our first contribution shows that every \emph{monadic} Presburger predicate can be robustly decided.  A Presburger predicate is monadic if it is represented by a formula in which each threshold and modulo atomic formula contains at most one variable\footnote{Notice that monadic formulas may have multiple variables; for example, $(x \leq 3 \vee y \geq 4) \wedge z \bmod 7 \leq 2$ is monadic.}. To our knowledge, this is the first fragment of Presburger arithmetic closed under logical operations for which robust decidability is proved.  

In our second contribution we study the cost of robustness in terms of \emph{state complexity}, defined as the number of states of the smallest protocol deciding  a predicate. We obtain a lower bound for the state complexity of \emph{asymptotically constant predicates} defined as the predicates $P(x_1, \ldots, x_n)$ for which there exists a $k$ such that $P(x_1, \ldots, x_n)$ returns the same value for all inputs $(x_1, \ldots, x_n)$ satisfying $x_i \geq k$ for every $1 \leq i \leq n$.  As a corollary, we show that the smallest robust protocol for the predicate $x \geq k$ has exactly $k$ states. It is known that $x \geq t$ is decidable by protocols with $\Theta(\log \log t)$ states for infinitely many $t \geq 0$, and every protocol needs $\Omega(\log \log t)$ \cite{BlondinEJ18,CzernerGHE24,CzernerEL23,Leroux22}. Therefore, the price of robustness in terms of state complexity is at least double exponential. 

\smallskip\noindent\textbf{Related work.} We have already discussed the work of \cite{Delporte-GalletFGR06} and \cite{LossinCEGP25} on crash failures.  Our paper extends this line of work. There also exists work on other failure models. In \cite{GuerraouiR09}, Guerraoui and Ruppert extend population protocols by assigning unique identities to agents and construct protocols for all  predicates in NSPACE($\log n)$ that are  robust with respect to a constant number of Byzantine failures. Alistarh \emph{et al.} have studied robustness of population protocols and chemical reaction networks with respect to failures in which agents can silently and probabilistically move to adversarially chosen states \cite{AlistarhDKSU17,Alistarh0U21}.  Di Luna \emph{et al.} study in \cite{LunaFIISV19,LunaFIISV20}  omission faults in which an agent fails to read its partner's state during an interaction.

\smallskip\noindent\textbf{Structure of the paper.} Section
\ref{sec:prelims} introduces general notation and formally defines population
protocols. \Cref{sec:robustness} defines robustness of protocols and Section \ref{sec:protocols} presents robust protocols for monadic predicates. Section 
\ref{sec:lowerbound} then proves a lower bound on the state complexity of robust protocols. Section \ref{sec:conclusions} contains conclusions.

%% file: sec-prelims.tex
\section{Preliminaries}
\label{sec:prelims}
\noindent\textbf{Natural Numbers.} We denote the set of all natural
numbers as \(\N \coloneq \{0,1,2,\ldots\}\) and the set of all natural
numbers modulo \(m\) as \(\Z_{m} \coloneq \{0,1,\ldots,
m-1\}\). Additionally we define the set \([k]\coloneq\{1,2,\ldots,k\}\).

\medskip\noindent\textbf{Multisets.} Let $E$ be a finite set. A multiset over $E$ is a mapping $E \rightarrow \N$, and $\N^E$ denotes the set of all multisets over $E$. We sometimes write multisets using set-like notation, e.g.\ $\multiset{a, 2 \cdot b}$ denotes the multiset $v$ such that $v(a) = 1$, $v(b) = 2$ and $v(e) = 0$ for every $e \in E \setminus \set{a,b}$.
The empty multiset $\multiset{ }$ is also denoted $\multiset{}$.  For $E' \subseteq E$, $v(E') := \sum_{e\in E'} v(e)$ is the number of elements in $v$ that are in $E'$.  The \emph{size} of $v \in \N^E$ is $\Abs{v} := v(E)$. The \emph{support} of $v \in \N^E$ is the set $\Support{v} := \{ e \in E \mid v(e) > 0 \}$. 
%If $E \subseteq \Z$, then we let $\multisum{v}:= \sum_{e \in E} e \cdot v(e)$ denote the sum of all the elements of $v$. 
Given $u,v \in \N^E$, we let $u \leq v$ denote that $u(e) \leq v(e)$ holds for every $e \in E$; further, we let $u + v$ and $u-v$ denote the multisets given by $(u+v)(e):=u(e)+v(e)$ 
and $(u-v)(e):=u(e)-v(e)$ for every $e \in E$. The latter is only defined if $u \geq v$.

\medskip\noindent\textbf{Population Protocols.}
  Let $K$ be a finite set of \emph{outputs}. A \emph{population protocol} over $K$ is a four-tuple \(\Prot = (Q, I,
  O, \delta)\), where \(Q\) is a finite set of \emph{states},  \(I \subseteq Q\) is the set of \emph{initial states},  \(O : Q \to K\) is the \emph{output function}, and \(\delta : \N^Q_2 \to \N^Q_2\) is the \emph{transition function}, where $\N^Q_2$ denotes the multisets over $Q$ of size exactly 2. 
%   \begin{itemize}
%    \item \(Q\) is a finite set of \emph{states},
%    \item \(I \subseteq Q\) is the set of \emph{initial states},
%    \item \(O : Q \to K\) is the \emph{output mapping}, and
%    \item \(\delta : \N^Q_2 \to \N^Q_2\) is the \emph{transition function}, where $\N^Q_2$ denotes the multisets over $Q$ of size exactly 2. 
%  \end{itemize}

If $\delta(\multiset{q_{1},q_{2}}) = \multiset{q_{1}',q_{2}'}$,  then we write  $q_{1},q_{2}~\mapsto~q_{1}',q_{2}'$ and call this expression a \emph{transition}. Given a transition $t =  \; q_{1},q_{2} \mapsto q_{1}',q_{2}'$, we let $\pre(t) := \multiset{q_1, q_2}$ and $\post(t) := \multiset{q_1', q_2'}$. If $\delta(\multiset{q_{1},q_{2}})$ is not explicitly defined, then we assume $\delta(\multiset{q_{1},q_{2}})= \multiset{q_{1},q_{2}}$.

\medskip\noindent\textbf{Configurations, steps, (fair) executions.}
 A \emph{configuration} of \(\Prot=(Q,I,O,\delta)\) is a multiset $\Cfg \in \N^{Q}$.  \(\Cfg(q)\) models the number of agents in state \(q\), and \(|C| := \sum_{q\in Q} C(q) \) denotes the total number of agents. $\Cfg$ \emph{populates} state $q$ if $\Cfg(q) > 0$.  The \emph{support} of $\Cfg$, denoted $\supp \Cfg$, is the set of states populated by $\Cfg$. \(\Cfg\) is \emph{initial} or an \emph{input} if \(\supp \Cfg \subseteq I\), that is, if it only populates initial states. 
\(\Cfg\) can make a \emph{step} to \(\Config{D}\), denoted
\(\Cfg \to \Config{D}\), if $\Cfg = \Config{D}$ or there exists a transition $t$ such that $\Cfg \geq \pre(t)$ and $\Config{D}= \Cfg - \pre(t) + \post(t)$.
We let \(\steps\) denote the reflexive transitive closure of \(\step\). If \(\Cfg \steps \Config{D}\), then we say that \(\Config{D}\) is \emph{reachable} from \(\Cfg\).

An \emph{execution} is an infinite sequence of configurations
  \(\pi = \Cfg_{0}, \Cfg_{1}, \Cfg_{2},\ldots\) such that \(\Cfg_{i} \step \Cfg_{i+1}\) for all \(i
  \in \N\).  An execution \(\pi\) is  \emph{fair} if for every two configurations
  \(\Cfg, \Config{D}\), if \(\Cfg\) occurs infinitely often in \(\pi\)
  and \(\Cfg \steps \Config{D}\) then \(\Config{D}\) also occurs
  infinitely often in \(\pi\). 

%Following  \cite{LossinCEGP25} we introduce the notion of a \emph{snipe} to model the failure of an agent by completely stopping to interact.
%
%\begin{definition}[Snipes]
%  If \(\Config, \ConfigB\) are two configurations, such that
%  \(\ConfigB \subseteq \Config\) and \(\left| \Config - \ConfigB \right| =
%  1\), then we call \(\Config \snipe \ConfigB\) a \emph{snipe}.
%\end{definition}
%
%\begin{definition}[Executions]
%  An execution with at most $k$ snipes is an infinitely long sequence of configurations,
%  \(\pi = \Config_{0}, Config_{1}\Config_{2},\ldots\), such that \(\Config_{i} \step \Config_{i+1}\) or \(\Config_{i} \snipe \Config_{i+1}\) for all \(i
%  \in \N\) and the number of indices \(i\) where \(\Config_{i}
%  \snipe \Config_{i+1}\) is at most \(k\).
% % If the number of snipes is not explicitly stated, it is assumed to be 0.
%
%  An execution \(\pi\) is called \emph{fair} if for every two configurations
%  \(\Config, \ConfigB\), if \(\Config\) occurs infinitely often in \(\pi\)  
%  and \(\Config \steps \ConfigB\) then \(\ConfigB\) also occurs
%  infinitely often in \(\pi\). 
%\end{definition}

\medskip\noindent\textbf{Consensus, stable consensus, computed function.} Let $r \in K$ be an output value. A configuration \(\Cfg\) is an \(r\)-\emph{consensus} if \(O(\supp \Cfg) = \{r\}\), that is, all agents
have output $r$. Further, $\Cfg$ is a \emph{stable} \(r\)-consensus if every configuration reachable from $\Cfg$ is also a \(r\)-consensus. The protocol \(\Prot\) computes a function \(f: \N^I \to K\) if for every input \(\Cfg_{0} \in \mathbb{N}^I\) every fair execution starting at \(\Cfg_{0} \in \mathbb{N}^I\) \emph{converges} to \(f(\Cfg_{0})\), meaning that it eventually reaches a stable \(f(\Cfg_{0})\)-consensus.
If $K=\{0,1\}$, then we also call $f$ a \emph{predicate} and say that the protocol \emph{decides} $f$. Notice that not every protocol computes a function. Observe also that the arity of the function computed by a protocol, if any, is equal to its number of initial states.

\begin{example}\label{ex:pebbles} We introduce \ref{prot:pebbles}, a simple protocol deciding the
  threshold predicate \(x \ge k\) for some fixed \(k \in \N\). Intuitively, each agent initially carries one pebble,  and some agent eventually collects all pebbles from all agents. If the number of pebbles is at least \(k\),
  then it tells other agents that the threshold has
  been reached. 
\begin{protocol}{Pebble}
\label{prot:pebbles}
\States{$Q = \set{0, 1, \ldots, k-1, k}$. (Each agent may hold up to \(k\) pebbles.)}
\Input{ $I = \set{1}$. (Each agent starts with one pebble.)}
\Transitions{For every $i, j \in Q$: 
\begin{itemize}
\item $i,j \mapsto i + j, 0$ \, if \(i + j < k\).  (One agent gives its pebbles to the other.)
\item $i,j \mapsto k, k$ \;\;\;\;\;\, if \(i + j \geq  k\). (State $k$ models ``I know the threshold is reached''.)
\end{itemize}}
\Output{$O(q) =$ \textbf{if} $q=k$ \textbf{then} $1$ \textbf{else} $0$. (Agents in state $k$ output 1, others 0.)}
\end{protocol}
%\noindent Notice that in some executions no agent ever collects all pebbles, but these executions are not fair.
\end{example}

\noindent\textbf{Computational power of population protocols.} Presburger arithmetic is the first-order theory of addition, usually defined as the theory $\textrm{Th}(\mathbb{N}, +, < , 0, 1)$. We consider
here the equivalent theory whose atomic formulas are 
\emph{threshold} formulas of the form $\sum_{i=1}^n a_i x_i \leq b$ for some $a_1, \ldots, a_n, b \in \mathbb{Z}$ and modulo formulas of the form $(\sum_{i=1}^n a_i x_i) \bmod m \leq b$ for some $a_1, \ldots, a_n \in \mathbb{Z}$ and $b, m \in \mathbb{N}$ with $b < m$. The theory is obtained by closing the atomic formulas under Boolean operations and existential quantification. The semantics of a formula $F(x_1, \ldots, x_n)$ with free variables $X=\{x_1, \ldots, x_n\}$ is the predicate $p_F : \mathbb{N}^X \to \{0,1\}$ defined in the expected way, e.g. for $F(x,y) = 2x - 3y > 5$ we have $p_F(n_1, n_2) = 1$ if{}f $2n_1 - 3n_2 > 5$; the predicates of Presburger formulas are called Presburger predicates. This version of Presburger arithmetic has a quantifier elimination procedure, and so every Presburger formula is equivalent to a Boolean combination of threshold and modulo formulas (resp. predicates) \cite{Haase18}. 

Let $K$ be a finite set. A function $f \colon \mathbb{N}^X \to K$ is a \emph{Presburger function} if 
for every $i \in K$ there is a Presburger formula $F_i(x_1, \ldots, x_n)$ such that  $f(a_1, \ldots, a_n) = i$ if{}f $p_{F_i}(a_1, \ldots, a_n)=1$.  Angluin \etal show in \cite{AngluinAER07} that a function  $f \colon \mathbb{N}^I \to K$ is computable by a protocol with $I$ as set of initial states if{}f $f$ is Presburger. In particular, a predicate $f \colon \mathbb{N}^I \to \{0,1\}$ can be computed or \emph{decided} by a population protocol if{}f $f$ is a Presburger predicate. 

%%% Local Variables:
%%% TeX-master: "robust-population-protocols"
%%% End:

\section{Snipes and robustness} \label{sec:robustness} 
The above notions capture the failure-free behavior of a
protocol. In practice, however, agents may crash in an adversarial way. We introduce the failure model of \cite{LossinCEGP25} and a notion of robustness. The notion is slightly stronger than the one of  \cite{LossinCEGP25}. At the end of the section we explain the difference and why a new notion is needed. 

In \cite{LossinCEGP25}, failures are modeled as \emph{snipes}. Intuitively, a snipe deletes an agent from a configuration. 

\begin{definition}[Snipes, executions with snipes]
  Let  \(\Cfg, \Config{D}\) be configurations. If
  \(\Config{D} \leq \Cfg\) and \(\left| \Cfg - \Config{D} \right| =
  1\) we write \(\Cfg \snipe \Config{D}\) and call this expression a \emph{snipe step}, or just a \emph{snipe}.
An \emph{execution with at most $k$ snipes} is an infinite sequence of configurations,
  \(\pi = \Cfg_{0}, \Cfg_{1}, \Cfg_{2},\ldots\) such that \(\Cfg_{i} \step \Cfg_{i+1}\) or \(\Cfg_{i} \snipe \Cfg_{i+1}\) for all \(i
  \in \N\) and the number of indices \(i\) where \(\Cfg_{i}
  \snipe \Cfg_{i+1}\) is at most \(k\).
 % If the number of snipes is not explicitly stated, it is assumed to be 0.
The sequence \(\pi\) is \emph{fair} if for every two configurations
  \(\Cfg, \Config{D}\), if \(\Cfg\) occurs infinitely often in \(\pi\)
  and \(\Cfg \steps \Config{D}\) then \(\Config{D}\) also occurs
  infinitely often in \(\pi\). 
\end{definition}

We introduce a notion of robustness. Intuitively, a protocol computing a function \(f: \N^I \to K\) is robust if for any input $\ICfg$ and any number $j \geq 0$ of snipes taking place at adversarial times, the output of the protocol coincides with the output for some input $\ICfg' \leq \ICfg$ such that $|\ICfg' - \ICfg| \leq j$. For example, let $f(x) := \min(x, 6)$ and take the input $x:= 7$. Then, for $1$ snipe a robust protocol must output $6$, for two snipes $6$ or $5$, for three snipes $6$, $5$, or $4$, etc. This fault-tolerance is optimal. Consider for example the case of two snipes. No protocol can guarantee that the output will be only $6$, because the adversary can snipe two agents \emph{before the agents start to interact at all} and the five remaining agents, ignorant that two agents have crashed, can only reach consensus $5$. Similarly, no protocol can guarantee that the output will be exactly $5$, because the adversary can snipe one agent at the beginning, wait until the remaining agents reach stable consensus $6$, and then snipe a second agent; since the consensus with $6$ agents was stable, after sniping one agent the remaining configuration is still a stable consensus.

\begin{definition}[Robustness]
\label{def:robustness}
 A population protocol computing a function \(f: \N^{I} \to K\) is \emph{robust} if
 for every initial configuration \(\ICfg \in \N^{I}\),  every \(j<\vert\ICfg\vert\), and every fair execution with $j$ snipes starting at $\ICfg$, there exists an output value $r \in K$  such that 
 \begin{itemize}
 \item  the execution eventually reaches an $r$-stable consensus, and
 \item \(f(\ICfg') = r\) for some initial configuration \(\ICfg' \leq \ICfg\) satisfying \(|\ICfg -\ICfg'| \leq j\).
\end{itemize}
We say that $r$ is a \emph{permissible output} for $\ICfg$ and $j$.
\end{definition}

\noindent\textit{Remark.} We limit $j$ to be less than $\vert\ICfg\vert$, since
if all agents are sniped, the output is no longer defined at all.

The following example is taken from \cite{LossinCEGP25}.

\begin{example}
\label{ex:tower}
The~\ref{prot:pebbles} family of protocols from Example~\ref{ex:pebbles} is not
robust. Consider the~\ref{prot:pebbles} protocol for $x \geq 100$, and the initial configuration with $199$ agents in state $1$. If the protocol were robust, then it should always output $1$ for any number of snipes between $0$ and $99$. However, one single snipe may suffice to change the output. Indeed, it is easy to see that the configuration in which two agents have $100$ and $99$ pebbles, respectively, and the rest have $0$ pebbles, is reachable from the initial configuration. If the agent with $100$ pebbles is now sniped, then only $99$ pebbles remain, and the protocol outputs $0$.

However, the  \ref{prot:tower} protocol below robustly computes $x \geq k$. Intuitively, at every moment in time each agent of the protocol is at one of the floors or \emph{levels} of a tower of height \(k\). Each agent wants to have a level just for itself, and so if two agents at the same level interact---and so discover they are not alone---then one of them climbs to the next level.
  \begin{protocol}{Tower}\label{prot:tower}
  \States{ $Q = \set{1, 2, \ldots, k}$. (Each agent is at some level in a tower of height \(k\).) }
  \Input{ $I = \{1\}$. (Every agent starts at level \(1\).)}
  \Output{ $O(i) = $ \textbf{if} $i < k$ \textbf{then} $0$ \textbf{else} $1$.  (Agents at level $k$ output 1, others 0.) }
  \Transitions{For every $i < k$:  
  \begin{itemize}
  \item $(i, i) \mapsto (i, i + 1)$. (If two agents meet at the same level (except the top one), one of  them climbs to the next level.)
  \item $(k, i) \mapsto (k, k)$. (An agent at the top level ``pulls up'' any other agent it meets.)
  \end{itemize}
%    If \(i < k\):
%    \begin{equation*}
%      \label{trans:tower-step}
%      \Trans{Step}
%      
%    \end{equation*}
%    An agent on the top level pulls up any other agent it meets.
%    \begin{equation*}
%      \label{trans:tower-pull}
%      \Trans{Pull}
%      (k, i) \mapsto (k, k)
%    \end{equation*}
  }
\end{protocol}
\noindent To see that \ref{prot:tower} is robust, observe that from any
configuration with at least \(n \le k\) agents, at least one agent will
eventually reach level \(n\) or a higher level. Indeed, if no agent is at level \(n\) or higher, 
by the pigeonhole principle some level has at
least two agents, allowing one of them to climb up.
Thus, if the initial configuration has \(k + s\) agents, then even if
up to \(s\) agents are sniped adversarially, at least one of the
remaining agents eventually reaches level \(k\) and
then pulls up all other remaining agents to level \(k\) as well.
\end{example}

\noindent \textit{Remark.} The definition of robustness of
\cite{LossinCEGP25} is for protocols that compute functions \(f:
\N^{I} \to \{0,1\}\) (i.e., protocols that decide predicates). We have
generalized it to functions \(f: \N^{I} \to K\) for any finite set
$K$. However, the restriction of our definition to the case
$K=\{0,1\}$ is slightly stronger than the definition of
\cite{LossinCEGP25}. To see why, consider the predicate $x \geq 7$ and
the input $7$. The definition of \cite{LossinCEGP25} does not put any
constraint on the fair executions with one snipe starting at $7$: some
fair executions may not reach any consensus at all. On the contrary, Definition \ref{def:robustness} requires
that all fair executions converge to either $0$ or $1$. Since protocols that are robust w.r.t. Definition \ref{def:robustness} are better behaved, and all protocols of \cite{LossinCEGP25} either are or can be easily made robust according to it, we consider Definition \ref{def:robustness}  the correct one\footnote{The authors of \cite{LossinCEGP25}, three of whom are also authors of this paper, agree (private communication).}.

%% file: sec-strongly-robust-protocols.tex
\renewcommand{\mod}{{\, \textrm{mod} \,}}
\section{Robust Protocols for Monadic Presburger Functions}
\label{sec:protocols}
As mentioned in the introduction, the question whether every Presburger predicate can be decided by a robust population protocol was left open in \cite{LossinCEGP25}. In this section we give a positive answer for \emph{monadic} Presburger predicates.

\begin{definition}
A formula of Presburger arithmetic is \emph{monadic} if every atomic formula appearing in it contains exactly one variable. A function $f \colon \mathbb{N}^X \to K$  is \emph{monadic} if it is the semantics of some monadic Presburger formula.
\end{definition}

In other words, a formula is monadic if its atomic formulas are of the
form $x_i \ge k$ or $x_i \bmod m \in M$. An example of a monadic formula that we use as running example is
$$
F(x,y) =  \big(x \geq 3 \, \vee \, y \bmod 7 \geq 2  \big)  \, \wedge \, \big( x
\bmod 5 \geq 1 \, \vee \,  y \bmod  2 \geq 1  \big)   \, \wedge \, \big( x \geq 4  \, \vee \, y \geq 17 \big)
$$
An example of a non-monadic  formula is $x -y \leq 3$. Hague \etal have shown that  deciding if a given Presburger formula has an equivalent monadic formula is co-NP-complete  \cite{HagueLRW20}.

It is well known that every monadic Presburger formula is equivalent to a Boolean combination of monadic threshold and modulo formulas \cite{Haase18}. So it suffices to show that every such combination, like $F(x, y)$ above, has a robust protocol. For this, fix an arbitrary
Boolean combination $F(x_1, \ldots, x_n)$ of monadic threshold predicates
$x_i \geq t$ and monadic modulo predicates $(x_i \bmod m) \in M$.  We construct a robust protocol $P_F$ that decides $F$. We will proceed in three steps:

\medskip \noindent \textbf{The functions $f^1, \ldots, f^n$.} For every variable $x_i$, define 
$t^i, m^i \in \mathbb{N}$ as follows:
\begin{itemize}
\item $t^i$ is the maximum of all $t \in \mathbb{N}$ such that the atomic formula $x \geq t$ appears in $F$; if there are none, then $t^i = 0$.
\item $m^i$ is the least common multiple (lcm) of all $m \in \mathbb{N}$ such
  that the atomic formula $x \bmod m \in M$ appears in $F$ for some $M \subseteq
  \Z_{m}$; if there are none, then $m^i=1$.
\end{itemize}
We define the function $f^i \colon \mathbb{N} \to [0,t^i] \times \Z_{m^i}$ as
$f^i(x_i) = (\min(x_i, t^i), x_i \bmod m^i)$.
\noindent In our running example we have $f^x(x) = (\min(x, 4), x \bmod 5)$ and $f^y(y) = (\min(y, 17), y \bmod 14)$.

\medskip \noindent \textbf{The robust protocols $P^1_F, \ldots, P^n_F$.}
For every variable $x_i$, we construct a robust protocol $P^i_F$
computing $f^i(x_i)$. Observe that for every $k \in \mathbb{N}$ the output of
$f^i(k)$ determines the values of \emph{all} the threshold and modulo
atomic formulas of $F$ on $x_i$. For example we have $f^y(26) = (17,
12)$, and from the output $(17, 12)$ we can deduce that $y \bmod 7 \geq
2$ is true, because $y \equiv 12 \pmod{14}$ implies $y \equiv 5
\pmod{7}$; that $y \bmod 2 \geq 1$ is false, because $y \equiv 12 \pmod{14}$ implies $y \equiv 0 \pmod{2}$; and that $y \geq 17$ is true. (Notice, however, that the output does not determine the value of $y$.) In particular, for every $(k_1, \ldots, k_n) \in \mathbb{N}^n$ the outputs of $f^{1}(k_1), \ldots, f^{n}(k_n)$ determine the truth value of every atomic formula of $F$, and so the truth value of $F(k_1, \ldots, k_n)$. This leads to the next step.

\medskip \noindent \textbf{The robust protocol $P_F$.} 
Let $f(x_1, \ldots, x_n) \colon ([0,t^1] \times \Z_{m^1}) \times \cdots \times ([0,t^n] \times \Z_{m^n})$ be the function  defined by
$f(x_1, \ldots, x_n) := (f^1(x_1), \ldots, f^n(x_n))$.
Given robust protocols $P^1_F, \ldots, P^n_F$ computing $f^1, \ldots f^n$, we construct a robust protocol $P_F$ computing $f$. The output of this protocol determines the value of $F$ and so, by adjusting the output mapping of the protocol, we obtain a robust protocol that decides $F$.

In our example, we have
$f(x, y) = (\min(x, 4), x \mod 5, \min(y, 17), y \mod 14)$, and so for instance
$f(14, 26) = (4, 4, 17, 12)$. From $(4, 4, 17, 12)$ we obtain that all
atomic formulas of $F(14, 26)$ are true except $y \bmod 2 \geq 1$. So $F(14, 26)$ is true.

\medskip In the rest of the section we develop this proof outline.  The protocols $P^i_F$ are constructed in three steps: Sections \ref{sec:min} and \ref{sec:mod} construct robust protocols computing the functions $\min(x_i,k)$ and $(x_i \mod k)$, respectively. Section \ref{sec:min-mod} gives a robust  protocol  for the function $(\min(x_i,k), x_i \mod k)$. Section \ref{sec:monadic-predicates} constructs $P_F$ given $P_F^1, \ldots, P_F^n$.

\subsection{Computing $\min(n,k)$ robustly}
\label{sec:min}
\input{sec-min.tex}
\subsection{Computing \((n \bmod m)\) robustly}
\label{sec:mod}
\input{sec-mod.tex}
\subsection{Computing $(\min(n,k), n \bmod m)$ robustly}
\label{sec:min-mod}
\input{sec-min-mod.tex}
\subsection{Monadic Predicates}
\label{sec:monadic-predicates}
\input{sec-monadic-predicates.tex}

%% file: sec-min.tex
We present a robust protocol computing
\(\min(n,k)\) for a given constant \(k\). Recall that
in the \ref{prot:tower} protocol, which robustly decides if $n \ge k$,
an agent who reaches the highest level of the tower pulls all other agents to the top. We modify
\ref{prot:tower} by adding a second component to the states:
In addition to their own level in the tower, agents now also remember
the highest level they have seen any agent occupy in all their
interactions so far, or,  in other words, the highest level they currently
know has already been reached by some agent.
\smallskip\begin{protocol}{RobustMin}\label{prot:robust-min}
  \States{$Q = [k] \times [k]$. Models: (own-level, highest-level-I-know-has-been-reached).}
  \Input{$I = \{(1, 1)\}$}
  \Output{$O(i, h) = h$}
  \Transitions{ When agents with the same level meet, one of them climbs to the next  level.
    Additionally, agents exchange and update their knowledge about the highest occupied level.
    \begin{align*}
      \label{trans:min-step}
      \hspace{-0.4cm}\Trans{Step}
      (i, h_1), (i, h_2) & \mapsto (i, \max\{h_{1},h_{2},i+1\}), (i + 1,
      \max\{h_{1},h_{2},i+1\}) && \text{if } i < k \\
      \hspace{-0.4cm}\label{trans:min-inform}
      \Trans{Inform}
      (i_1, h_1), (i_2, h_2) &\mapsto (i_1, \max\{h_1, h_2\}), (i_2, \max\{h_1, h_2\}) && \text{if }
                                                          i_{1} \ne i_{2}
    \end{align*}

    \smallskip Note that we no longer need to pull up agents to the topmost
    level. This functionality is completely replaced by the second
    state component.
  }
\end{protocol}

\begin{theorem}\label{lem:robust-min}
  \ref{prot:robust-min} computes \(\min(n,k)\) robustly.
\end{theorem}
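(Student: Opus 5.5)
Let $\ICfg$ be an input with $n$ agents, and take any fair execution with $j<n$ snipes. The plan is to show that it converges to a stable consensus on a value $r$ with $\min(n-j,k)\le r\le \min(n,k)$. Such an $r$ is permissible: when $r<k$ we have $r\ge n-j$, so the input $\ICfg'$ with exactly $r$ agents satisfies $\ICfg'\le\ICfg$, $|\ICfg-\ICfg'|\le j$ and $f(\ICfg')=r$. When $r=k$ we have $n\ge k$, so $\ICfg'=\ICfg$ works.

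\textbf{Invariants.} I will establish three invariants along any execution with snipes. Each is preserved by \ref{trans:min-step} and \ref{trans:min-inform} by inspection, and snipes only remove agents, which cannot break them.
\begin{enumerate}
\item Every agent $(i,h)$ satisfies $i\le h$.
\item The second components never decrease for any surviving agent. Consequently the maximum $H$ of the second components over the agents that have existed is nondecreasing and bounded by $k$.
\item If some agent ever had second component $h$, then at that moment of the original (snipe-free accounting) execution at least $h$ agents had been present. More precisely, no agent ever reaches level $i$ unless at least $i$ agents exist in the original population. This holds because the levels occupied by the agents of the original population always form a set in which level $\ell$ can only be reached through a collision at level $\ell-1$.
\end{enumerate}
The cleanest formulation of the third invariant is: at all times, if level $\ell$ is occupied or has ever been recorded, then there are at least $\ell$ agents among the original $n$ that have occupied distinct levels $1,\dots,\ell$ at some time. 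I will prove it by induction, observing that climbing to level $i+1$ requires two agents at level $i$, one of which stays at $i$. A simpler route gives the upper bound directly: every value of $h$ equals some level that was actually reached, and the Tower argument (pigeonhole over the levels) shows that reaching level $\ell$ needs $\ell$ agents. Hence $r\le\min(n,k)$.

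\textbf{Convergence and lower bound.} After the last snipe, the configuration has $n-j'\ge n-j$ agents with $j'\le j$, and from then on the execution is an ordinary fair execution. Since the second components are monotone and bounded, and the levels can climb only finitely often (the total sum of levels is bounded and strictly increases with each \ref{trans:min-step}), every fair execution eventually stops changing levels. By fairness, no two remaining agents then share a level below $k$; otherwise a step would stay enabled forever. The pigeonhole argument from Example~\ref{ex:tower} then shows that some agent is at level at least $\min(n-j,k)$. Fairness of \ref{trans:min-inform} (together with \ref{trans:min-step} at level $k$, where two agents at level $k$ keep their level) spreads the maximal second component $h^\ast$ among the survivors to all of them. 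After that no transition changes any second component, because all second components equal $h^\ast\ge$ every level and any new value would be a max of existing values or of some $i+1$ that already equals a current level. So the configuration is a stable $h^\ast$-consensus with $\min(n-j,k)\le h^\ast$.

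\textbf{Main obstacle.} The remaining point is to bound $h^\ast$: the maximum may have been learned from a sniped agent, so it is not necessarily a current level. This is handled by monotonicity combined with the invariant that every recorded value is a level once reached by some agent. Together with the pigeonhole fact that reaching level $\ell$ requires at least $\ell$ agents in the original population, this gives $h^\ast\le\min(n,k)$. I expect this "reaching level $\ell$ needs $\ell$ agents" claim to be the step needing the most care. The plan is to prove by induction that at any time, if level $\ell$ has ever been reached, then there are agents currently at levels $\ell'$ for each $\ell'<\ell$ among all agents, counting sniped ones as frozen, so that distinct agents witness the levels $1,\dots,\ell$. The inductive step uses the fact that a \ref{trans:min-step} at level $i$ leaves one agent at $i$.
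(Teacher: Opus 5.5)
Your proposal is correct and follows essentially the same route as the paper. Pigeonhole after the levels stabilize, combined with the invariant that the second component dominates the level, gives $h^\ast \ge \min(n-j,k)$; the facts that every recorded value is a level actually reached and that reaching level $\ell$ needs $\ell$ agents give $h^\ast \le \min(n,k)$; and Inform propagates the maximum. Your device of counting sniped agents as frozen is a cleaner way to state the paper's occupancy invariant, which read literally is not preserved by snipes, and your only slip is invoking a Step at level $k$, which the protocol does not have — harmless, since agents at level $k$ already carry second component $k$.
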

\begin{proof}
 Let \(\pi =
  \Cfg_{0},\Cfg_{1}, \ldots\) be a fair execution with \(s\)
  snipes starting at the initial configuration of \ref{prot:robust-min} that puts $n$ agents in its input state. After the last snipe, all subsequent configurations have \(n-s\) agents.

  We first show that some agent eventually reaches level 
  \(\min(n{-}s,k)\) or above. As long as two agents share a level below \(k\),
  by fairness they eventually interact and one climbs via
  transition~\ref{trans:min-step} to the next level. So eventually each level
  below~\(k\) has at most one agent, and by the pigeonhole principle at
  least one agent is at level \(\min(n{-}s,k)\) or above. Further,
  all transitions preserve the invariant that
  if level \(i\) is occupied then so are all levels \(1, \ldots,
  i{-}1\). Hence no agent ever reaches a level above \(n\).

  Let \(h\) be the largest second component of the state of any agent after
  the last snipe.  Some agent must have been at level \(h\) at some
  point, and so \(h \le n\). Further, all transitions preserve the invariant that the second component of the state of an agent is at least as large as the first. So \(h \ge \min(n{-}s,k)\), which implies that \(h\) is a permissible output of the protocol.  By fairness,  transition~\ref{trans:min-inform} eventually propagates \(h\) to all agents.
\end{proof}

%% file: sec-mod.tex
We now consider functions \(f(n) = n \bmod m\)
for some constant \(m\ge2\). In~\cite{LossinCEGP25} Lossin \etal
describe a protocol that robustly decides whether \(f(n) \in A\) for
some fixed set \(A\). We construct a robust protocol computing \(f(n)\). We start with some intuition about the construction.

The first insight we need is that if at least $m-1$ agents are sniped,
then every output is permissible for $(n \bmod m)$, so we mainly need to worry about the
case of at most $m-2$ snipes. Since this number is independent of $n$, we can apply redundancy: loosely speaking, we \emph{replicate} a protocol for $(n \bmod m)$, and return the output produced by the largest number of replicas. If the number of replicas is large enough, then, even if some of them fail, the output of the surviving ones is still correct. 

Our protocol extends \ref{prot:robust-min}. An agent's level within the
tower determines the replica he currently works for. Each agent
carries one pouch for each level. The agents at level $i$ count (modulo
$m$) how many agents they have seen pass through level $i$: whenever
two agents in level $i$ meet, one passes the pebbles in its $i$-th pouch
to the other agent and climbs to level $i+1$. There it starts counting
anew by placing a single pebble in his $(i+1)$-th pouch. If no agents
are sniped then the count in the $i$-th level will converge to $(n - i
+ 1) \bmod m$, as all agents are counted, except for the $i-1$ agents
occupying the lower levels. Hence $n \bmod m$ can be determined by
simply adding $i - 1$ to this count. Sniping an agent in level $i$ may
have an arbitrary effect on the count in that replica; however, in other
replicas, at most the sniped agent itself is not counted. If
the number of replicas is large enough, this ensures that the most
frequently occurring output among the replicas is permissible.

Agents in different levels inform each other about the counts in their
respective replicas. Each then chooses the most frequently given
result as final output. To ensure convergence even if no agent remains
for a replica, if two agents disagree on the count for a replica
neither of which work for, they both reset it to 0. 
Finally, if not enough agents are available to populate all levels,
the output is determined from the highest occupied level.

\begin{protocol}{RobustMod}\label{prot:robust-mod}
  \States{Let $\Copies = m^{2}$.
    \par\noindent\hspace*{\labelwidth}$\begin{array}[t]{@{}r @{\;} l @{\quad} l}
      Q \coloneq & [\Copies{+}1] & \text{current level of the agent} \\
      \times & [\Copies{+}1] & \text{highest level the agent knows has been populated} \\
      \times & \Z_{m}^{\Copies} & \text{agent's information on the counts in each replica}
    \end{array}$
% The first two components of the state behave as in \ref{prot:robust-min}. An agent at level $1 \leq i \leq \Copies$ works for the $i$-th replica.
}
  \Input{$I \coloneq \set{(1,1,(1, 0, \ldots, 0))}$}
  \Output{Given $\vec u \in \Z_{m}^{\Copies}$,  let $\pl(\vec u)$, the \emph{plurality} of $\vec u$, be the value that  occurs most frequently in the components of $\vec u$, or the smallest one in case of a tie.\\
    \par\noindent\hspace*{\labelwidth}$
     O(i,t,\vec u) \coloneq \begin{cases}
        t \bmod m & \text{if } t \le \Copies \\
        \pl(\vec u + (0,1,2,\ldots c-1)) & \mbox{otherwise}
  \end{cases}$\\
   }
  \Transitions{ Intuitively,  an agent that moves up in the tower transfers its current count to the agent 
    remaining behind, and starts counting anew in the next level. Agents exchange information on the highest  level they know has been populated, and on the counts of all replicas. Agents are an authority on the output of the replica they currently work for, which in an interaction is just copied by others. If two agents
    disagree on the output of a replica neither of them is an authority on,
    they both reset their values for it to $0$. This ensures convergence in the
    case that an agent is sniped.  For the formal description, let $\vec u \reconcile \vec v$ denote the \emph{reconciliation} of two 
    vectors, defined by $(\vec u \reconcile \vec v)_j = u_j$ if $u_j = v_j$, and $0$
    otherwise. There are two sets of transitions, depending on whether the interacting agents occupy the same level, different from the top one, or not. The first one is:
    \begin{align*}
      \label{trans:mod-move-up}
      \Trans{MoveUp}
      \quad (i, t_1, \vec u), (i, t_2, \vec v) &\mapsto (i, t', \vec w), (i + 1, t', \vec w) &
                                                               \text{if
                                                               } i \le \Copies
    \end{align*}
    where $\vec w$ is obtained from $\vec u \reconcile \vec v$ by replacing the $i$-th
    component with $(u_i{+}v_i) \bmod m$, and, if $i < \Copies$, the
    $(i{+}1)$-th component with $1$; and $t' := \max\{t_1, t_2, i{+}1\}$. The second one is:
    %$w := (u \reconcile v)[i \coloneq (u_i{+}v_i) \bmod m,\; i{+}1 \coloneq 1]$ and $t' := \max\{t_1, t_2, i{+}1\}$, and
    \begin{align*}
      \label{trans:mod-exchange}
      \Trans{Exchange}
      (i_1, t_1, \vec u), (i_2, t_2, \vec v) &\mapsto (i_1, t', \vec w), (i_2, t', \vec w) & \text{if } i_{1} \ne i_{2} \text{ or } i_{1} = i_{2} = \Copies+1
    \end{align*}
    where $\vec w$ is obtained from $\vec u \reconcile \vec v$ by
    replacing, if $i_1 \le \Copies$, the $i_1$-th component with
    $u_{i_1}$, and, if $i_2 \le \Copies$, the $i_2$-th component with
    $v_{i_2}$; and $t' := \max\{t_1, t_2\}$.    
    % $w = (u \reconcile v)[i_1 \coloneq u_{i_1},\; i_2 \coloneq v_{i_2}]$ and $t' = \max\{t_1, t_2\}$.
  }
\end{protocol}

We prove correctness and robustness of  \ref{prot:robust-mod}.
\begin{definition}\label[definition]{def:lvl-val-out}
Let  $q=(i, t, \vec u)$ be a state of \ref{prot:robust-mod}. The \emph{level} of $q$ is $\ell(q):=i$. The \emph{value vector} of $q$ is $\vvalues(q) \coloneq \vec u$. The  \emph{output vector} of $q$ is $ \voutputs(q) := \vec u + (0, 1, \ldots, \Copies -1)$. The value vector $\VValues(\Cfg)$ and output vector
$\VOutputs(\Cfg)$ of a configuration $\Cfg$ are given by:
\begin{align*}
    \Values_i(\Cfg) &\coloneq \sum_{(i,t,\vec u) \in Q} \Cfg(i,t, \vec u) \cdot (u_i \bmod m) &
    \VOutputs(\Cfg) & \coloneq   \bigg(\VValues(\Cfg) + (0, 1, \ldots, \Copies -1)  \bigg) \bmod m
\end{align*}
%\begin{align*}
%    \VValues(\Cfg) & \coloneq   \bigg( \sum_{q \in Q} \Cfg(q) \cdot \values(q) \bigg) \bmod m   &
%    \VOutputs(\Cfg) & \coloneq   \bigg(\VValues(\Cfg) + (0, 1, \ldots, \Copies -1)  \bigg) \bmod m
%\end{align*}
%  \begin{align*}
%    \Values_i(\Cfg) &\coloneq \sum_{(i,t,\vec v) \in Q} \Cfg(i,t, \vec v) \cdot (v_i \bmod m) &
%    \Outputs_i(\Cfg) &\coloneq (\Values_{i}(\Cfg) + c -i) \bmod m
%  \end{align*}
\end{definition}
Loosely speaking $\Values_{i}(C)$ counts (modulo $m$) how many agents have reached level $i$ or higher in $C$; it does not count agents in levels below $i$. The output vector ``compensates'' for this.

Consider an execution $\Cfg_0, \Cfg_1, \ldots$, possibly with snipes, of \ref{prot:robust-mod}.
For the proof it is convenient to assign identities to agents, which we assume are natural numbers. Let 
$q_{j}^{t}$ be the state of the $j$-th agent in $\Cfg_t$. We consider sniped agents to move to a special state $\bot$.

\begin{lemma}\label[lemma]{lem:mod-invariant}
  For every $t$ and $1 \le i \le \Copies$, if until step $t$ no agent in level $i$ has been
  sniped, then $\Values_{i}(\Cfg_t) \equiv \left| \setc*{j}{\exists t'\le t .\, \ell\left(q_{j}^{t'}\right) = i}\right| \pmod{m}$.
\end{lemma}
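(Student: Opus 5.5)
Fix $1 \le i \le \Copies$. I will argue by induction on $t$, with the invariant
\[
\Values_i(\Cfg_t) \equiv N_i(t) \pmod m, \qquad N_i(t) \coloneq \left|\setc*{j}{\exists t'\le t .\, \ell\left(q_j^{t'}\right)=i}\right| .
\]
The premise is that no agent at level $i$ has been sniped up to step $t$. If it holds at $t$, it also holds at every $t' \le t$, so the induction hypothesis is available.

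\textbf{Base case.} In $\Cfg_0$ all agents are in the input state $(1,1,(1,0,\ldots,0))$.
\begin{itemize}
\item For $i=1$: $\Values_1(\Cfg_0)=|\Cfg_0|$, and every agent has been at level $1$, so both sides equal $|\Cfg_0| \bmod m$.
\item For $i>1$: no agent is at level $i$, so both sides are $0$.
\end{itemize}

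\textbf{Induction step.} Look at the step $\Cfg_t \to \Cfg_{t+1}$. The key observation is that $\Values_i$ only sums the $i$-th component $u_i$ of agents currently at level $i$. So I only need to track how this step changes the set of agents at level $i$ and their $i$-th components. There are four cases.

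\textbf{Snipe.} By hypothesis, the sniped agent is not at level $i$. Hence the agents at level $i$ and their states are unchanged, and $\Values_i$ is unchanged. $N_i$ is also unchanged, because it counts past visits and nothing visits level $i$ in this step.

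\textbf{Exchange.} Levels are unchanged. Suppose an agent is at level $i_1 = i$, say the first one. If $i_1 \ne i_2$, the new $i$-th component is explicitly set to $u_{i_1}$, so the first agent keeps its value. If $i_1=i_2=\Copies+1$, then $i \le \Copies$ means neither agent is at level $i$, and the level-$i$ contribution is untouched. Note that the reconciliation $\reconcile$ may change the $i$-th component of an agent at another level. This is harmless, because $\Values_i$ ignores such agents. This is the point of defining $\Values_i$ this way, and the one thing to state explicitly.

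\textbf{MoveUp at level $i$.} The two agents contributed $u_i+v_i$ before the step. Afterwards one stays at level $i$ with $i$-th component $(u_i+v_i)\bmod m$, and the other moves to level $i+1$, where it no longer counts towards $\Values_i$. So $\Values_i$ is preserved mod $m$. $N_i$ is unchanged, since both agents were already counted.

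\textbf{MoveUp at level $i-1$ (only for $i \ge 2$).} One agent newly enters level $i$. Its $i$-th component is set to $1$ when $i \le \Copies$, which is exactly our case. Every transition keeps an agent's level the same or increases it, so this agent has never been at level $i$ before. Thus $N_i$ increases by exactly $1$, and so does $\Values_i$.

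\textbf{MoveUp at any other level.} Agents at level $i$ are untouched and $N_i$ does not change.

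The main obstacle is just verifying carefully in the case analysis that the replacement rules in \ref{trans:mod-move-up} and \ref{trans:mod-exchange} override $\reconcile$ precisely for the agent at level $i$, and that levels are monotone (so no agent is counted twice in $N_i$). Both follow directly from the transition definitions.
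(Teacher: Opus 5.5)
Your proof is correct and follows essentially the same route as the paper: induction on $t$ with a case analysis over \textsc{Exchange}, \textsc{MoveUp} and snipe steps, using that $\Values_i$ only sums the $i$-th component of agents currently at level $i$. You also spell out two points the paper's proof leaves implicit. First, reconciliation can only alter $i$-th components of agents at other levels. Second, levels are monotone, so the agent entering level $i$ raises the count by exactly one.
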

\begin{proof}
 We prove the claim by induction on \(t\). The base case is 
 \(t=0\). At this point $\ell(q_{j}^{0}) = 1$ and \(\vvalues(q_{j}^{0}) =
 (1,0,\ldots,0)\) for all agents \(j\), so $\Values_{1}(C_{0}) = n \bmod m$ and
 $\Values_{i}(C_{0}) = 0$ for all $i > 1$. For the induction step, consider the step from $C_{t}$ to $C_{t+1}$.
 We have either $C_{t} \step C_{t+1}$ or $C_{t}\snipe C_{t+1}$.  
 In the first case, the step is caused by either \ref{trans:mod-exchange} or
 \ref{trans:mod-move-up}. Transition \ref{trans:mod-exchange}
  does not change the level of either agent, and for both involved
  agents $\values_{\ell(q_{j}^{t})}(q_{j}^{t}) =
  \values_{\ell(q_{j}^{t+1})}(q_{j}^{t+1})$. So $\VValues(C_{t}) =
  \VValues(C_{t+1})$. Transition \ref{trans:mod-move-up}
 moves one agent from level \(i\) to level
  \(i+1\). Accordingly, as this agent sets the \(i+1\)-th component of the value vector to $1$, 
  $\Values_{i+1}(C_{t+1})$ increases by one compared to
  $\Values_{i+1}(C_{t})$. Further, $\Values_{i}(C_{t+1})$ does not change, as the
  contribution of the agent moving up is added to the other agent.
  Finally, if $C_{t}\snipe C_{t+1}$ happens by sniping an agent in level $i$,
  then only $\Values_{i}(C_{t+1})$ can change compared to $\Values_{i}(C_{t})$,
  and we explicitly exclude this case in the statement of the lemma.
\end{proof}
% \begin{proof}
%   We prove the claim by induction on \(t\). The base case \(t = 1\)
%   holds trivially, since all agents start in level 1 and initially
%   have value 1 for replica $i$. For the induction step, consider
%   all possible execution steps. Transition \ref{trans:mod-exchange} does
%   not change any \(\Values_i\), nor does it change the level of any agent.
%   Transition \ref{trans:mod-move-up} moves one agent from level \(i\) to
%   level \(i+1\). $\Values_{i+1}$ also increases by 1, since the agent
%   moving to level \(i+1\) has value 1 for replica \(i+1\).
%   Since its value for replica \(i\) is added to
%   the other agent, $\Values_{i}$ does not change.
%   Finally, sniping an agent in level \(i\) only changes \(\Values_i\),
%   which we explicitly exclude in the lemma statement.
% \end{proof}
\begin{theorem}\label{thm:robust-mod}
 \ref{prot:robust-mod} computes $(n \bmod m)$ robustly.
\end{theorem}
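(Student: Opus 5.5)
We fix an initial configuration with $n$ agents and a fair execution $\pi=\Cfg_0,\Cfg_1,\ldots$ with $s<n$ snipes. We show two things. First, $\pi$ reaches a stable consensus, whatever $s$ is. Second, the consensus value is $n' \bmod m$ for some $n' \in [n-s,n]$. If $s \ge m-1$, every residue is of this form, so only convergence matters. For the correctness part we may therefore assume $s \le m-2$.

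\textbf{Convergence.} We consider the suffix after the last snipe.
\begin{itemize}
\item \emph{Levels stabilize.} Transition \ref{trans:mod-move-up} only raises levels, so it fires finitely often. As in the proof of Theorem~\ref{lem:robust-min}, fairness yields a point after which every level $i\le\Copies$ holds at most one agent and levels never change again.
\item \emph{Second components stabilize.} From then on, \ref{trans:mod-exchange} propagates the maximum second component $h$ to all agents.
\item \emph{Occupied replicas stabilize.} For an occupied level $i\le\Copies$, the unique agent there never changes its own $i$-th component, since \ref{trans:mod-exchange} keeps it. Every agent it meets copies that value, and two agents that agree on it keep it.
\item \emph{Unoccupied replicas stabilize.} For an unoccupied level, any disagreement resets both values to $0$. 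A meeting between a $0$-holder and a non-$0$-holder also yields $0$, so a configuration in which everyone holds $0$ is reachable. Agreement on this component is preserved.
\end{itemize}
Hence from every configuration of the suffix we can reach a configuration in which all agents have identical second and third components, and all levels $\le\Copies$ are singly occupied. In such a configuration no transition changes any state, so it is a stable consensus. Since the suffix has only finitely many configurations and the execution is fair, some configuration recurs infinitely often, and therefore such a stable consensus is eventually reached.

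\textbf{Correctness when $h\le\Copies$.} We argue as in Theorem~\ref{lem:robust-min}: $\min(n-s,\Copies+1)\le h\le n$. If $h\le\Copies$, then $n-s\le h\le n$, and the output $h \bmod m$ is permissible with $n'=h$.

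\textbf{Correctness when $h=\Copies+1$.} Call a replica $i\le\Copies$ \emph{good} if no agent at level $i$ was ever sniped.
\begin{itemize}
\item \emph{Good replicas are occupied at the end.} A level, once entered, remains occupied: \ref{trans:mod-move-up} always leaves an agent behind, and only a snipe can empty it. Since level $\Copies+1$ was reached, every level $\le\Copies$ was entered. So every good level is occupied in the final configuration, and all agents there carry that occupant's value $u_i \equiv \Values_i$.
\item \emph{Counting agents that never reach level $i$.} By Lemma~\ref{lem:mod-invariant}, $\Values_i$ counts the agents ever at level $i$, that is, $n$ minus the agents that never reached level $i$. The latter are the final survivors below $i$ plus the sniped agents below $i$.
\begin{itemize}
\item Upper bound: the survivors sit on distinct levels below $i$, so there are at most $i-1$ of them, and at most $s$ agents were sniped. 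This gives at most $i-1+s$.
\item Lower bound: each of the levels $1,\dots,i-1$ is either finally occupied or was emptied by a snipe of an agent below $i$. This gives at least $i-1$.
\end{itemize}
\item \emph{Good components are permissible.} The $i$-th output component is $\Values_i+i-1 \equiv n-\delta_i \pmod m$ with $\delta_i\in[0,s]$.
\end{itemize}

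\textbf{The counting step.} Bad replicas number at most $s$, since each needs a distinct-level snipe. Good replicas number at least $\Copies-s=m^2-s$, and their outputs lie among at most $s+1$ permissible residues. By pigeonhole, some permissible residue occurs at least $(m^2-s)/(s+1)$ times, which exceeds $s$ because $m^2>s^2+2s$ for $s\le m-2$. The plurality value occurs at least as often, hence more than $s$ times, so at least one of its occurrences comes from a good replica. It is therefore permissible.

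I expect the main obstacle to be making the counting of agents that never reach level $i$ rigorous under snipes, together with the fact that good levels are finally occupied. Both rest on the invariant that a level, once entered, stays occupied unless an agent on it is sniped. I would prove this invariant first, by induction as in Lemma~\ref{lem:mod-invariant}.
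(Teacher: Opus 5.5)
Your proof is correct and follows essentially the same route as the paper. Convergence comes from single occupancy of levels $\le c$, followed by leader propagation and reconciliation to $0$; a per-level count shows each unsniped replica outputs $n-\delta_i \bmod m$ with $0\le\delta_i\le s$ (the paper's $\sum_{j<i}s_j'$); and pigeonhole is applied over the at least $c-s$ unsniped replicas. Your version is somewhat more careful than the paper's: it treats the case $h\le c$ explicitly, counts only $s+1$ permissible residues instead of $m-1$, and spells out why the plurality value itself must have an occurrence in a good replica.
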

\begin{proof}
  Consider a fair execution \(\pi=\Cfg_{0},\Cfg_{1},\ldots\) with $s$
  snipes. As in \ref{prot:robust-min}, the protocol eventually reaches a configuration where each
  level $i \le \Copies$ has at most one agent, which we call the \emph{leader} of level $i$, 
  and all remaining agents are in level $\Copies{+}1$. After that, only
  \ref{trans:mod-exchange} transitions occur, which propagate
  each leader's value to all agents.  For levels whose leader has
  been sniped, disagreements are resolved to $0$ by reconciliation,
  so all agents eventually agree on the same value vector. More
  specifically, let $C$ be the configuration reached after
  convergence. Then, for levels $i$ in which no snipe has occurred,
  $\values_{i}(q) = \Values_{i}(C)$, since $\Values_{i}(C)$ is
  completely determined by the leader of level $i$.

 This already shows that the protocol always converges to some
result. We now show that that result is permissible. Since we compute
modulo $m$, for $s \geq m-1$ any output is permissible by the definition of
robustness. Thus we only need to consider $s \leq m-2$ snipes.

  Let $s_{i}$ be the number of agents that have been sniped in level
  $i$. Then there exist $s_{i}' \le s_{i}$, such that the number of
  agents which have at some point been in level $i$ is \(n - i + 1 -
  \sum_{j=1}^{i-1}s_{j}'\).  This is because all agents that have
  not been sniped in a lower level, or are a leader of a lower level,
  have at some point passed through level $i$.
  Hence, by \Cref{lem:mod-invariant}, for any level $i$ with $s_{i} =
  0$, we have
  $\Values_{i}(C) \equiv n - i + 1 - \sum_{j = 1}^{i-1} s_{j}' \pmod{m}$ and
  therefore 
  $\Outputs_{i}(C) \equiv n - \sum_{j = 1}^{i-1} s_{j}' \pmod{m}$.
  Since there are $s$ snipes total, at most $s$ levels can have $s_i
  > 0$. Thus at least $\Copies - s$ levels remain unaffected.
  The output in each of these levels is permissible, as it is of the
  form $n - k \bmod m$ for some $k \le s$. Since we assumed $s \le m-2$ there are
  at most $m-1$ permissible outputs. 
  By the pigeonhole principle, one of these must occur in at least
  $\frac{\Copies - s}{m-1}$ replicas, which is more than the maximum
  number of replicas affected by snipes:
  \begin{equation*}
    \frac{\Copies - s}{m-1} \geq \frac{m^2 - m + 2}{m - 1} = m + \frac{2}{m-1} > m > s
  \end{equation*}

  Thus among the outputs of all replicas there is a plurality for a permissible output.
\end{proof}

%%% Local Variables:
%%% TeX-master: "robust-population-protocols"
%%% End:

%% file: sec-min-mod.tex
We show that $\min(n, k)$ and $(n \bmod m)$ can be
computed \emph{simultaneously} and robustly by a slight modification of 
\ref{prot:robust-mod}. Recall that the first component of the states
of \ref{prot:robust-mod} model a tower of height $\Copies$.
It suffices to extend the height to $\max(\Copies{+}1, k{+}m)$. The resulting protocol is:

\begin{protocol}{RobustMinMod}\label{prot:robust-min-mod}
  \States{ Let $\Copies \coloneq m^2$ and $\Height \coloneq \max(\Copies{+}1, k{+}m)$.
    \par\noindent\hspace*{\labelwidth}$\begin{array}[t]{@{}r @{\;} l @{\quad} l}
      Q \coloneq & [\Height] & \text{agent's own level} \\
      \times & [\Height] & \text{highest level the agent knows of} \\
      \times & \Z_{m}^{\Copies} & \text{values computed in each mod replica}
    \end{array}$}
 \Input{$I \coloneq \set{(1,1,(1, 0, \ldots, 0))}$}
  \Output{$O(i,t,\vec u) \coloneq \begin{cases}
       \big( \min(t,k),\; t \bmod m \big) & \text{if } t < \Height \\
       \big( \min(t,k),\; \pl(\vec u + (0,1,2,\ldots c-1)) \big) & \text{otherwise}
     \end{cases}$}
  \Transitions{As in \ref{prot:robust-mod}, with $\Copies$ replaced by $\Height$.}
\end{protocol}

 \Cref{lem:robust-min,thm:robust-mod} already show that the
 individual components of the output given by
 \ref{prot:robust-min-mod} are permissible under 
 robustness. However, we need to show that the combination of both is
 also permissible. Consider for example the case where $k = 5$, $m = 3$,
 $n = 6$. In a run with 2 snipes, 4 would be a permissible output for
 the first component ($\min(n,k)$) and 2 would be a permissible output
 for the second component ($n \bmod m$). However, achieving either
 requires removing different amounts of agents initially, and hence
 the combination of both is not permissible. We show that such a
 situation cannot occur.

We define the level \(\ell\), the value vector \(\values\) and output
vector \(\outputs\) of a state as well as the value vector \(\Values\)
and output vector \(\Outputs\) of a configuration the same as for the
\ref{prot:robust-mod} protocol in \Cref{def:lvl-val-out}.

\begin{theorem}\label{thm:robust-min-mod}
\ref{prot:robust-min-mod} 
 computes \((\min(n,k), n \bmod m)\) robustly. 
\end{theorem}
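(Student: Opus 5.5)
Fix a fair execution with $s$ snipes from the input with $n$ agents. Convergence follows as in the proofs of \Cref{lem:robust-min} and \Cref{thm:robust-mod}: eventually each level below $\Height$ has at most one agent, after which only \ref{trans:mod-exchange} occurs, and the maxima $t$ and the reconciled value vectors stabilise. So the work is to exhibit a single $n' \in [n-s, n]$ with $\min(n',k)$ equal to the first output component and $n' \bmod m$ equal to the second. Let $h$ be the final common second component. As in \Cref{lem:robust-min}, $\min(n-s,\Height) \le h \le n$, using the invariant that occupied levels are downward closed and that $t$ is at least the own level.

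\textbf{Case $h < \Height$.} The output is $(\min(h,k), h \bmod m)$. Since $h \ge \min(n-s,\Height)$ and $h<\Height$, we get $n-s \le h \le n$, so $n' := h$ works.

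\textbf{Case $h = \Height$.} The output is $(k, \pl(\cdot))$, because $\Height \ge k$. Now $n \ge \Height \ge k+m$. If $s \ge m-1$, the window $[\max(n-s,k), n]$ contains at least $m$ consecutive integers $\ge k$: indeed $n-(m-1)\ge k+1$. So some $n'$ in it has the plurality value as residue and satisfies $\min(n',k)=k$. If $s \le m-2$, the argument of \Cref{thm:robust-mod} applies verbatim to the $\Copies$ replicas in levels $1..\Copies$ (all exist since $\Height \ge \Copies+1$ and the relevant levels are at most $\Copies$). This shows the plurality equals $(n-k') \bmod m$ for some $k' \le s$. Take $n' = n-k'$; then $n' \ge n-m+2 > k$, so $\min(n',k)=k$ as well.

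The main obstacle is this last case: one must check that Lemma~\ref{lem:mod-invariant} and the counting in \Cref{thm:robust-mod} still hold when the tower is taller than $\Copies+1$. Levels above $\Copies$ carry no replica component, and \ref{trans:mod-move-up} between them only affects the tower, so $\Values_i$ for $i\le\Copies$ behaves exactly as before. One also needs that every non-sniped, non-leader agent passes through each level $i \le \Copies$. This holds because the final configuration has leaders at all levels up to $\Copies$ whenever $h=\Height$, by downward closure. The choice $\Height \ge k+m$ is exactly what makes the first component $k$ compatible with every residue in the window.
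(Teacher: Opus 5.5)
Your proof is correct and follows the paper's own case split ($h<\Height$, then $h=\Height$ with the sub-split at $s \ge m-1$). The one deviation is in the final subcase: you obtain $\min(n',k)=k$ directly from $n \ge \Height \ge k+m$ and $k' \le m-2$, which is a slightly shorter route than the paper's count showing $n-\sum_{j<i}s'_j \ge \Height$. Your side remark that every level up to $\Copies$ has a leader when $h=\Height$ ``by downward closure'' is false, since a snipe can permanently empty a level; it is also unnecessary, because non-leaders all end at level $\Height$ and emptied lower levels are absorbed by the $s'_j \le s_j$ bookkeeping of \Cref{thm:robust-mod}.
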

\begin{proof}
 \Cref{lem:robust-min,thm:robust-mod} already show that all the
 agents will come to a consensus on both the tower height $h$ and the
 value vector $\values$ for the mod-replicas. Hence it only
 remains to show that the combination is permissible. Let $\Cfg$ be the final
 configuration reached. We distinguish two cases:

\smallskip\noindent  \emph{Case 1. $h < \Height$.}
   In this case the two output components are $\min(h, k)$ and
   $h \bmod m$. These are obviously compatible.

\smallskip\noindent  \emph{Case 2. $h = \Height$.}
   In this case the output components are $\min(h, k) = k$ and $\pl(\voutputs_i)$.
   Let $s$ be the total number of snipes, and let $s_{i}$ be the
   number of snipes on level $i$. If \(s \ge m - 1\), then any mod-output is permissible. Since $n \ge \Height$ and $\Height - k \ge m$, for any $r \in \Z_m$ we can remove $(n-r) \bmod m$ agents to achieve mod-output $r$ while still having at least $k$ agents remaining, so $(k, r)$ is permissible.

   In the following we assume that \(s < m-1\). Recall from the proof of
   \Cref{thm:robust-mod}, that there exist $s_{i}' \le s_{i}$, such that
   the number of agents which have at some point been on level \(i\)
   is \(n - i+1 - \sum_{j=1}^{i-1} s_{j}'\), and then by
   \Cref{lem:mod-invariant}, for all levels \(i\) without snipes \(\Outputs_{i}(\Cfg) \equiv n - \sum_{j=1}^{i-1}
   s_{j}' \pmod m\).
   On the other hand, for $h = \Height$ to be
   possible, at least one agent must have reached level $\Height$, which means
   that at least $\Height - i + 1$ must have reached level $i$ at some point:
   $$\begin{array}{rcl}
     \big( n - i + 1 - \sum_{j=1}^{i-1}s_{j}' \ge \Height - i + 1 \big) 
     & \text{if{}f} & 
     \big(n - \sum_{j=1}^{i-1}s_{j}' \ge \Height \big)
   \end{array}$$
   For $\Outputs_{i}(\Cfg)$, where $s_{i} = 0$, this implies
     \(\Outputs_{i}(\Cfg) \equiv n' \pmod{m}\)
   for some $\Height \le n' \le n$.
   And then, since $k < \Height$, for each level $i$ without snipes, $(k,
   \Outputs_{i}(\Cfg))$ is a permissible combination of output
   components. Since \(s < m-1\), the plurality of the outputs
   corresponds to \(\Outputs_{i}\) for some level \(i\) without snipes by the
   same argument as in the proof of \Cref{thm:robust-mod}. Hence this
   protocol is also robust.
\end{proof}

%%% Local Variables:
%%% TeX-master: "robust-population-protocols"
%%% End:

%% file: sec-monadic-predicates.tex
Given functions $f^1 \colon \mathbb{N} \to K_1, \ldots, f^n \colon \mathbb{N} \to K_n$ of arity $1$ and robust protocols $P^1, \ldots, P^n$ computing $f^1, \ldots f^n$, we show how to construct a robust protocol $P$ computing the function $f \colon \mathbb{N}^n \to K_1\times \cdots \times K_n$ of arity $n$ given by 
$f(x_1, \ldots, x_n) := (f^1(x_1), \ldots, f^n(x_n))$.

\smallskip\noindent \textit{Remark.} Observe that $f$ is again a function $\mathbb{N}^n \to K$ for
$K:= K_1\times \cdots \times K_n$. In other words, if $f^1, \ldots, f^n$ are functions
with a finite range, then so is $f$. However, if $f^1, \ldots, f^n$ are
predicates, i.e., $K_1 = \cdots = K_n=\{0,1\}$, then $f$ is not a predicate. This explains why we defined robustness for functions, not just predicates.

%Observe that we compute the function $f(x_1, \ldots, x_n)$ of arity $n$ and \emph{not}  the function $f' \colon \mathbb{N} \to K_1\times \cdots \times K_n$ of arity 1 given by $f'(x) := (f^1(x), \ldots, f^n(x))$. Let us see why the latter is intuitively harder. 
%
%Since $f^i$ has arity $1$, the protocol $P_i$ computing $f^i$ has one single initial state, and outputs $f^i(k)$ for every initial configuration with $k$ agents in it and no agents in any other state.  So, intuitively, the protocol has $k$ agents at its disposal to compute $f(k)$. The protocol $P$ computing $f(x_1, \ldots, x_n)$ has $n$ initial states, and outputs $f(k_1, \ldots, k_n)$ for every initial configuration $(k_1, \ldots, k_n)$. So, intuitively, $P$ has $k:=k_1+ k_2 + \ldots + k_n$ agents at its disposal to compute $(f^1(k_1), \ldots, f^n(k_n))$, and can in principle naturally split the $k$ agents into groups of $k_1, \ldots, k_n$ agents, with the group of $k_i$ agents ``responsible'' of computing $f^i(k_i)$. These groups can exchange information about the results of their own computations, so that eventually all agents reach a consensus about the tuple $(f^1(k_1), \ldots, f^n(k_n))$.
%
%Such a construction is no longer possible for computing $f'(x) := (f^1(x), \ldots, f^n(x))$. A protocol $P'$ computing $f'(x)$ has one single initial state, and so it must compute all of $f^1(k), \ldots, f^n(k)$ with only $k$ agents. In particular, the task cannot be distributed among groups of agents. 

We present a construction that, given robust protocols $P_i$ for $f^i(x_i)$, delivers a robust protocol $P$ for $f(x_1, \ldots, x_n)$. Using the construction it is easy to prove the existence of robust
protocols for all monadic Presburger predicates.

A subtlety here is
that, although a population protocol has no defined output on the empty
configuration, any individual variable may be $0$ initially or be
reduced to $0$ by snipes; our construction gives permissible outputs in either case.

\begin{definition}
Let $f\colon F \rightarrow F'$ and $g\colon G \rightarrow G'$. The \emph{parallel composition} of $f$ and $g$ is the function 
  \(f \times g \colon F \times G \rightarrow F' \times G'\) with  \( (a,b) \mapsto (f(a), g(b))\)
\end{definition}

Let $\Prot_{f} = (Q_{f}, I_{f}, O_{f}, \delta_{f})$ and $\Prot_{g} = (Q_{g},
I_{g}, O_{g}, \delta_{g})$ be population protocols
computing functions $f\colon \mathbb{N}^{I_f} \to F'$ and $g\colon \mathbb{N}^{I_g} \to G'$
respectively. We define the
parallel composition of these two protocols:

\begin{protocol}{$\Prot_f \times \Prot_g$}\label{prot:parallel}
  \States{\(Q \coloneq (\{f\} \times Q_f \times G') \cup (\{g\} \times Q_g \times F')\)
      
    Each agent computes either $f$ or $g$ and stores the current output
    of the other function.
  }
  \Input{
    \(
      I \coloneq \set{f} \times I_f \times \set{g(\mathbf{0})} \cup \set{g} \times I_g \times \set{f(\mathbf{0})}
      \)

    where $\mathbf{0}$ denotes the zero input.
  }
  \Output{
      \(O(f, q_f, o_g) \coloneq (O_f(q_f), o_g)\qquad 
      O(g, q_g, o_f) \coloneq (o_f, O_g(q_g))\)
  }
  \Transitions{
    If $h \in \{f, g\}$ and $(q_1, q_2) \to (q_1', q_2') \in \delta_h$:
    \begin{align*}
      \label{trans:parallel-exec}
      \Trans{Execution}
      (h, q_1, o), (h, q_2, o) &\mapsto (h, q_1', o), (h, q_2', o)
    \end{align*}
    Agents computing the same function execute their protocol if they
    agree on the other function's output.
    If they disagree, they reset to the zero output:
    \begin{align*}
      \label{trans:parallel-reset}
      \Trans{Reset}
      \begin{aligned}
      (f, q_1, o_1), (f, q_2, o_2) &\mapsto (f, q_1', g(\mathbf{0})), (f, q_2', g(\mathbf{0})) \\
      (g, q_1, o_1), (g, q_2, o_2) &\mapsto (g, q_1', f(\mathbf{0})), (g, q_2', f(\mathbf{0}))
      \end{aligned}
    \end{align*}
    where $(q_1, q_2) \to (q_1', q_2') \in \delta_f$ or $\delta_g$ respectively, and $o_1 \neq o_2$.

    Agents computing different functions exchange information:
    \begin{align*}
      \label{trans:parallel-exchange}
      \Trans{Exchange}
      (f, q_f, \_{}), (g, q_g, \_) &\mapsto (f, q_f, O_g(q_g)), (g, q_g, O_f(q_f))
    \end{align*}
  }
\end{protocol}

\begin{proposition}\label{prop:parallel-robust}
  If $\Prot_f$ and $\Prot_g$ are robust, then $\Prot_f \times \Prot_g$ is also robust.
\end{proposition}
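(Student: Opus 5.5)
Fix an input $\ICfg = (\ICfg_f, \ICfg_g)$ of $\Prot_f \times \Prot_g$ and a fair execution $\pi$ with $j < |\ICfg|$ snipes. The plan is to split the snipes into the $s_f$ snipes hitting $f$-agents and the $s_g$ snipes hitting $g$-agents, so $s_f + s_g = j$. Then we project $\pi$ onto each sub-population separately. The tag $f$ or $g$ never changes, so the $f$-agents form a fixed sub-population that starts in $\ICfg_f$ and suffers exactly $s_f$ snipes.

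\textbf{Projection.} The first step is to show that the projection of $\pi$ to the $Q_f$-components of the $f$-agents is an execution of $\Prot_f$ with $s_f$ snipes. Every \ref{trans:parallel-exec} or \ref{trans:parallel-reset} step between two $f$-agents projects to a step of $\delta_f$. Every \ref{trans:parallel-exchange} step and every step among $g$-agents projects to a stuttering step. Snipes project to snipes or to stuttering steps.

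The second step is fairness of the projection, which I expect to be the main obstacle, since the extra components ($o$ values and $g$-states) could a priori prevent projected configurations from recurring. Suppose a projected configuration $D$ occurs infinitely often and $D \steps D'$ in $\Prot_f$. Some full configuration $C$ projecting to $D$ occurs infinitely often, because after the last snipe there are only finitely many configurations. Any $\delta_f$-step can be simulated from any full configuration: if the two interacting agents agree on $o$ we use \ref{trans:parallel-exec}, otherwise \ref{trans:parallel-reset}, and both apply the same $\delta_f$ transition to the $Q_f$-components. Hence some $C'$ projecting to $D'$ is reachable from $C$. By fairness of $\pi$, $C'$ occurs infinitely often, so $D'$ occurs infinitely often in the projection.

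\textbf{Component values.} Two cases arise for the $f$-part. If $s_f < |\ICfg_f|$, robustness of $\Prot_f$ gives that the $f$-agents eventually stabilise on a consensus $r_f = f(\ICfg_f')$ for some $\ICfg_f' \le \ICfg_f$ with $|\ICfg_f - \ICfg_f'| \le s_f$. If instead all $f$-agents are sniped (including the case $\ICfg_f = \mathbf{0}$), then $r_f := f(\mathbf{0})$ and $\ICfg_f' := \mathbf{0}$ is permissible, since $|\ICfg_f| \le s_f$. The same holds symmetrically for $g$. The permissible combined output is $(r_f, r_g) = (f\times g)(\ICfg_f', \ICfg_g')$, because $|\ICfg - (\ICfg_f', \ICfg_g')| \le s_f + s_g = j$.

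\textbf{Stable consensus.} It remains to show that the stored components converge to these values and then stay fixed.
\begin{itemize}
\item \emph{Both groups survive.} Once the $Q_f$-parts are a stable $r_f$-consensus and the $Q_g$-parts a stable $r_g$-consensus, \ref{trans:parallel-exchange} only writes $r_g$ into the $f$-agents and $r_f$ into the $g$-agents. By fairness, every agent eventually meets an agent of the other group. From then on no transition can write any other value: \ref{trans:parallel-reset} requires disagreement, and \ref{trans:parallel-exec} copies the existing value.
\item \emph{The $g$-group is empty from some point on.} The $f$-agents' stored values can then change only through \ref{trans:parallel-reset}, which writes $g(\mathbf{0})$. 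The number of $f$-agents not storing $g(\mathbf{0})$ never increases. Whenever two $f$-agents disagree, a reset is enabled (there is always some $\delta_f$ transition, possibly the identity), so fairness drives all $f$-agents to store $g(\mathbf{0}) = r_g$. This requires at least two surviving $f$-agents. If exactly one $f$-agent survives, its stored value is whatever it last saw: either the initial $g(\mathbf{0})$ or $O_g$ of some earlier $g$-agent. This edge case must be examined. I would handle it by noting that the stored value then equals $g(\ICfg_g'')$ for some permissible intermediate configuration of the $g$-group. Concretely, the last exchanged $O_g(q_g)$ came from a $g$-configuration whose own value is permissible by an argument similar to the one above. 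If this fails, the case needs a finer argument.
\end{itemize}

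Finally, every reachable configuration keeps both components fixed, so the execution reaches a stable $(r_f, r_g)$-consensus, which is permissible.
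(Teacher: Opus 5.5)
Your projection step, your fairness argument, and the case where both groups survive are sound and follow the paper. Your fairness argument is in fact more careful than the paper's about the stored components. The gap is in the case where one group, say the $g$-group, is sniped entirely. Your claim that fairness drives all surviving $f$-agents to store $g(\mathbf{0})$ is false even with many survivors. \textsc{Reset} is only enabled between $f$-agents that \emph{disagree}. Suppose each surviving $f$-agent did an \textsc{Exchange} with a $g$-agent in a state $q$ with $O_g(q)=d'\neq g(\mathbf{0})$, and afterwards all $g$-agents were sniped. Then the survivors agree on $d'$, and nothing ever changes it. So the single-survivor situation is not an edge case. In general you only get that the survivors eventually agree on some $d'$, which is either $g(\mathbf{0})$ or $O_g(q)$ for a state $q$ populated by the $g$-group at some earlier moment. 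Defining $r_g := g(\mathbf{0})$ therefore does not describe the actual output.

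What is missing is an argument that such an $O_g(q)$ is permissible, and ``an argument similar to the one above'' does not supply it. Robustness of $\Prot_g$ only constrains the eventual stable consensus of executions, not the output of a transient state. The key idea, which the paper uses, is to build an execution of $\Prot_g$ in which $O_g(q)$ \emph{becomes} the stable consensus:
\begin{itemize}
\item Follow the projected $g$-execution up to the configuration $D_g$ in which some agent is in $q$, then snipe all other $g$-agents.
\item The singleton $\multiset{q}$ admits no non-trivial step, so it is a stable $O_g(q)$-consensus.
\item It is reached by a fair execution from $\ICfg_g$ with $\Abs{\ICfg_g}-1<\Abs{\ICfg_g}$ snipes in total, counting those already made in $\pi$.
\end{itemize}
Robustness of $\Prot_g$ then gives $O_g(q)=g(\ICfg_g')$ for some $\ICfg_g'\le \ICfg_g$. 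Since $s_g=\Abs{\ICfg_g}$, this $\ICfg_g'$ is within the snipe budget, so $(r_f,d')$ is permissible. The same fix is needed symmetrically when all $f$-agents are sniped.
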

\begin{proof}
 Let the initial configuration be
 $\Cfg = \set{f} \times C_f + \set{g} \times C_g$ with
 $C_f \in \N^{I_f}$ and $C_g \in \N^{I_g}$.
 Consider a fair execution $\pi$ with $s$ snipes, of which $s_f$ affect agents
 computing $f$ and $s_g$ affect agents computing $g$.

 If $C_g = \emptymultiset$, then $\Prot_f \times \Prot_g$ merely simulates
the robust protocol $\Prot_f$ on $C_f$, which gives a permissible
output for $f$. Since the initial (and correct) value of $g(\mathbf{0})$
cannot change, the overall output is permissible. The symmetric case
$C_f = \emptymultiset$ is analogous. In the remainder of the proof we
assume $C_f, C_g \neq \emptymultiset$.

 Let $\pi_{f},\pi_{g}$ be the projections of $\pi$ onto the $\Prot_{f}$
 and $\Prot_{g}$ components. Observe that $\pi_{f},\pi_{g}$ are also fair. Let $D_{f}$ be a
 configuration of $\Prot_{f}$ occurring infinitely often in
 $\pi_{f}$, and let $E_{f}$ be reachable from $D_{f}$. Since there are only finitely
 many configurations of $\Prot_{g}$ reachable from $C_{g}$, there must
 be at least one configuration $D_{g}$ such that  $\set{f}\times D_{f} +
 \set{g} \times D_{g}$ occurs infinitely often in $\pi$. Since the
 configuration $\set{f}\times E_{f} + \set{g}\times D_{g}$ is reachable and we
 assumed $\pi$ to be fair, it must also occur infinitely often. Hence
 $E_{f}$ occurs infinitely often in $\pi_{f}$. The same argument holds
 for $\pi_{g}$.

 By robustness of $\Prot_f$ and $\Prot_g$, the agents within each group
 eventually reach a stable consensus on permissible outputs $b \in F'$
 and $d \in G'$, respectively. If at least one agent remains in both groups,
 transition~\ref{trans:parallel-exchange} propagates $b$ and $d$ to all
 agents, and a stable consensus is reached.

 If all agents computing $g$
 have been sniped, then the surviving
 \(f\)-agents either disagree on their stored \(g\)-output, or they all
 agree on some $d' \in G'$. In the first case,
 transition~\ref{trans:parallel-reset} resets them to $g(\mathbf{0})$,
 which is permissible since it corresponds to sniping all \(g\)-agents
 in $C_g$ before any interaction. In the second case, $d'$ was copied
 from a \(g\)-agent before it was sniped, so there exists a
 configuration $D_g \in \N^{Q_g}$ of $\Prot_g$ such that $C_g \steps D_g$
 and $D_g$ populates some state $q$ with $O_g(q) = d'$. Sniping all but this one
 agent in $D_g$ leaves the singleton $\multiset{q}$, which is a stable
 $d'$-consensus, so by robustness of $\Prot_g$ the output $d'$ is
 permissible for $\Abs{C_g}-1$ snipes, and hence also for $\Abs{C_g}$ snipes.
 The symmetric case where all agents computing $f$ have been sniped is analogous.
\end{proof}
\begin{theorem}
Robust population protocols exist for every monadic Presburger predicate.
\end{theorem}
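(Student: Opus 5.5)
We assemble the theorem from the constructions of this section. Let $F(x_1,\ldots,x_n)$ be a monadic Presburger formula. By \cite{Haase18} we may assume $F$ is a Boolean combination of atomic formulas $x_i \geq t$ and $(x_i \bmod m) \in M$. For each $i$ we define $t^i$, $m^i$ and $f^i(x_i) = (\min(x_i,t^i), x_i \bmod m^i)$ as in the outline. \Cref{thm:robust-min-mod}, with $k = t^i$ and $m = m^i$, gives a robust protocol $P^i_F$ computing $f^i$. The degenerate cases $m^i = 1$ and $t^i = 0$ need a short check. For $m^i=1$ we can use \ref{prot:robust-min}, and for $t^i = 0$ a constant first component; alternatively we note that the construction still works there, or we raise $m^i$ to a multiple $\geq 2$ and $t^i$ to at least $1$, which keeps all atomic formulas determined by the output.

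Next we combine the $P^i_F$ by iterating parallel composition, $P := (\cdots((P^1_F \times P^2_F)\times P^3_F)\cdots)\times P^n_F$. Each step is robust by \Cref{prop:parallel-robust}, so by induction $P$ robustly computes $f = f^1\times\cdots\times f^n$. Here the parallel composition of functions with one and with several inputs is read with $I = I_f \cup I_g$ as disjoint initial states. This matches $f(x_1,\ldots,x_n) = (f^1(x_1),\ldots,f^n(x_n))$ after flattening nested tuples.

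Finally we change the output function. Define $\beta$ on the output set $K = \prod_i ([0,t^i]\times \Z_{m^i})$ by $\beta(a_1,r_1,\ldots,a_n,r_n) := $ the truth value of $F$ when $x_i \geq t$ is read as $a_i \geq t$ and $(x_i \bmod m)\in M$ as $(r_i \bmod m) \in M$. Since $t \leq t^i$ and $m \mid m^i$, this gives $\beta(f(\vec x)) = F(\vec x)$ for all $\vec x$. The protocol $P'$ is $P$ with output $\beta\circ O$. Robustness transfers directly. If a fair execution of $P$ with $j$ snipes converges to a stable $r$-consensus with $r = f(A')$ for some permissible $A' \leq A$, then under $P'$ it converges to a stable $\beta(r)$-consensus, because stability is a property of the reachable configurations, which are the same. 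Moreover $\beta(r) = F(A')$, so $\beta(r)$ is permissible.

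The main obstacle is the composition step. We must make sure \Cref{prop:parallel-robust} applies to protocols with several initial states, and that the zero-input convention $g(\mathbf{0})$ is meaningful for the composed protocol even though a protocol has no defined output on the empty configuration. This is handled by using $f(\mathbf{0}) = (f^1(0),\ldots,f^n(0))$ as the formal value, which the proof of \Cref{prop:parallel-robust} already treats as the outcome of sniping a whole group. The other steps are bookkeeping.
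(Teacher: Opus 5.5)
Your proposal is correct and follows the paper's proof: you build per-variable robust protocols for $(\min(x_i,t_i), x_i \bmod m_i)$ from \Cref{thm:robust-min-mod}, merge them by repeated application of \Cref{prop:parallel-robust}, and then relabel the output function. Your extra checks are sound and are details the paper leaves implicit: the degenerate cases $m^i=1$ and $t^i=0$ (where \ref{prot:robust-mod} as written assumes $m\geq 2$), and the explicit argument that relabeling outputs preserves stable consensus and permissibility.
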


\begin{proof}
 Let $\varphi \colon \N^{k} \rightarrow \set{0,1}$ be a monadic Presburger predicate.
Each atomic predicate occurring in $\varphi$ is either a threshold
predicate $\nu_{i,j}(x_{i}) \Leftrightarrow x_{i} \ge t_{i,j}$ for $t_{j} \in \N$, or a modulo predicate
$\mu_{i,j}(x_{i}) \Leftrightarrow (x_{i} \bmod m_{i,j}) \in M_{i,j} $ for some $m_{i,j} \in
\N$ and $M_{i,j} \subseteq \Z_{m_{i,j}}$. For every atomic predicate for
 variable $x_{i}$ define $t_{i} \coloneq \max_{j} t_{i,j}$ and $m_{i} \coloneq \lcm_{j} m_{i,j}$.
Then we have $\nu_{i,j}(x_{i})$ if{}f  $\nu_{i,j}(\min\set{x_{i}, t_{i}})$ and  
$\mu_{i,j}(x_{i})$ if{}f $\mu_{i,j}(x_{i} \bmod m_{i})$.
  By \Cref{thm:robust-min-mod}
  there exist robust protocols computing
  $f^{i}(x_{i})\coloneq(\min(x_{i},t_{i}), x_{i}\bmod m_{i})$ for every
  variable $x_{i}$ and by repeated application of
  \Cref{prop:parallel-robust} we can merge these into a single robust protocol computing
  $f(x_{1},\ldots,x_{k}) \coloneq (f^{1}(x_{1}),\ldots, f^{k}(x_{k}))$.
    Since all atomic predicates of $\varphi$ are determined by the output of $f$,
    adjusting the output function yields
    a robust protocol deciding $\varphi$.
\end{proof}

%%% Local Variables:
%%% TeX-master: "robust-population-protocols"
%%% End:

%% file: sec-lower-bound.tex
\section{State Complexity of Robust Computation}
\label{sec:lowerbound}

Consider the family $\{ x \geq k\}_{k \geq 0}$ of predicates. The family can be decided by protocols with $O(\log k)$ states \cite{BlondinEJ18,CzernerGHE24}. Further, an infinite subfamily can be decided by protocols with only $\Theta(\log\log k)$ states, and no infinite subfamily can be decided by protocols with $O(\log\log\log k)$ states  \cite{BlondinEJ18,Czerner24,CzernerEL23,CzernerGHE24,Leroux22}.  
However, a simple inspection shows that none of these protocols are robust\footnote{We do not further elaborate on this, because it is also a consequence of the results of this section.}. This raises the question of how succinctly (i.e., with the least possible number of states) can a family of protocols robustly decide $\{x \geq k\}_{k \geq 0}$. The \ref{prot:tower} protocols of Example \ref{ex:tower} decide $x \geq k$ with $k$ states.  We prove that they are optimal: every protocol robustly deciding $x \geq k$ has at least $k$ states.  We derive this result from a more general theorem giving a lower bound on the number of states of protocols that robustly decide \emph{asymptotically constant} predicates, defined next.

\begin{definition}
Let $\varphi : \mathbb{N}^I \to \{0,1\}$ be a predicate. An input
$\ICfg \in \mathbb{N}^I$ of $\varphi$ is \emph{upward invariant} if $\varphi(\ICfg') =
\varphi(\ICfg)$ for every $\ICfg' \in \mathbb{N}^I$ such that $\ICfg \leq \ICfg'$. The set of upward-invariant inputs of $\varphi$ is denoted $\mathcal{U}_\varphi$. 
\end{definition}

Upward-invariant inputs have the following property:
\begin{lemma}
\label{lem:upwardinvariant}
All upward-invariant inputs of a predicate have the same output.
\end{lemma}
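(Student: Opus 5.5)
Take two upward-invariant inputs $\ICfg_1, \ICfg_2 \in \mathcal{U}_\varphi$ and compare both with a common upper bound. The plan is to use the componentwise maximum $\ICfg_3 := \max(\ICfg_1, \ICfg_2)$, defined by $\ICfg_3(q) = \max(\ICfg_1(q), \ICfg_2(q))$ for every $q \in I$. Any input that dominates both $\ICfg_1$ and $\ICfg_2$ would serve; for instance $\ICfg_1 + \ICfg_2$ works equally well.

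Since $\ICfg_1 \leq \ICfg_3$ and $\ICfg_1$ is upward invariant, the definition gives $\varphi(\ICfg_3) = \varphi(\ICfg_1)$. By the same reasoning, $\ICfg_2 \leq \ICfg_3$ gives $\varphi(\ICfg_3) = \varphi(\ICfg_2)$. Therefore $\varphi(\ICfg_1) = \varphi(\ICfg_2)$, and all inputs in $\mathcal{U}_\varphi$ share the same output.

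There is no real obstacle here. The only thing to confirm is that $\ICfg_3$ is again an input in $\mathbb{N}^I$, which is immediate because it is a componentwise maximum of elements of $\mathbb{N}^I$. The argument does not need $\mathcal{U}_\varphi$ to be nonempty: if it is empty, the statement holds vacuously.
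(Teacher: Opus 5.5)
Your proof is correct and follows the paper's argument: the paper likewise takes any input $A$ with $A \geq A_1$ and $A \geq A_2$ and uses upward invariance of each to conclude $\varphi(A_1) = \varphi(A) = \varphi(A_2)$. Your choice of the componentwise maximum simply makes such a common upper bound explicit.
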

\begin{proof}
Let $\varphi : \mathbb{N}^I \to \{0,1\}$ be a predicate and let $\ICfg_1, \ICfg_2 \in
\mathcal{U}_\varphi$.  Let $\ICfg \in \mathbb{N}^I$ be any input such that $\ICfg \geq
\ICfg_1$ and $\ICfg\geq \ICfg_2$.  Since $\ICfg_1$ and $\ICfg_2$ are
upward invariant, we have $\varphi(\ICfg) = \varphi(\ICfg_1)$ and $\varphi(\ICfg) =
\varphi(\ICfg_2)$, and so $\varphi(\ICfg_1) = \varphi(\ICfg_2)$.
\end{proof}

\begin{definition}
\label{def:ac}
A predicate $\varphi$ is \emph{asymptotically constant}, or an \emph{ac-predicate}, if $\mathcal{U}_\varphi$ is nonempty. An ac-predicate $\varphi$ is 
\begin{itemize}
\item \emph{asymptotically zero}, or an a0-predicate, if $\varphi(A) = 0$ for every $A \in  \mathcal{U}_\varphi$; and 
\item \emph{asymptotically one}, or an a1-predicate, if $\varphi(A) = 1$ for every $A \in  \mathcal{U}_\varphi$. 
\end{itemize}
\end{definition}
%Observe that a predicate is asymptotically constant if{}f there exists $k \in \mathbb{N}$ such that $\varphi(\ICfg)$ has the same value for every $\ICfg$ satisfying $\ICfg(x) \geq k$ for every $x \in I$. 
\begin{example}
All predicates $x \geq k$ are a1-predicates with $\mathcal{U}_{x \geq k} = \{k' \in \mathbb{N} \mid k' \geq k\}$. The predicate $x+y \leq 4$ is an example of an a0-predicate. Predicates like $x-y\geq 0$ or $x \bmod m \geq k$ are not asymptotically constant.  %It is easy to see that ac-predicates are closed under Boolean operations.  
\end{example}

\noindent The case distinction of Definition \ref{def:ac} is exhaustive by Lemma \ref{lem:upwardinvariant}. Our goal is to prove:
\begin{theorem}
\label{thm:lowerbound}
Every robust population protocol computing an ac-predicate
$\varphi$ has at least $\min_{\ICfg \in \mathcal{U}_\varphi} \max_{x \in I} \ICfg(x)$
states.
\end{theorem}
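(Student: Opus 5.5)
The plan is to argue by contradiction. Let $N := \min_{\ICfg \in \mathcal{U}_\varphi} \max_{x \in I} \ICfg(x)$, and suppose a robust protocol $P=(Q,I,O,\delta)$ computes $\varphi$ with $|Q| < N$. By Lemma~\ref{lem:upwardinvariant} all upward-invariant inputs share one output; call it $b$. The first step is to extract a "witness" input from the minimality of $N$. The diagonal input $\ICfg_{N-1}$ with $\ICfg_{N-1}(x)=N-1$ for all $x\in I$ has maximum coordinate $N-1<N$, so it is not in $\mathcal{U}_\varphi$. Hence there is some $B \geq \ICfg_{N-1}$ with $\varphi(B) = 1-b$. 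Every coordinate of $B$ is then at least $N-1 \geq |Q|$. Fix also some $\ICfg_0 \in \mathcal{U}_\varphi$.

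The second step records two facts about stability and sniping. (a) If $\Cfg$ is a stable $r$-consensus and $D \leq \Cfg$, then $D$ is a stable $r$-consensus. This holds because any run from $D$ lifts to a run from $\Cfg$ in which the extra agents stay idle. (b) For any $M$, consider the input $\ICfg_0 + (M,\dots,M)$ run with at most $M$ snipes. Every $\ICfg' \leq \ICfg_0+(M,\dots,M)$ with $|\ICfg_0+(M,\dots,M) - \ICfg'| \leq M$ satisfies $\ICfg' \geq \ICfg_0$, so $\varphi(\ICfg') = b$. Hence robustness forces every such execution to converge to $b$.

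Fact (b) lets the adversary exploit the gap between the number of agents and the number of states. The core of the plan is a pumping/saturation step, modelled on the invariant of \ref{prot:tower}. Since each input coordinate of $B$ already exceeds $|Q|$, any configuration reachable from a larger input $\ICfg \geq B$ should be "simulable up to multiplicities" from $B$. Concretely, I want to show that for every configuration $\Cfg$ reachable from $\ICfg$ there is a configuration $D$ reachable from $B$ with $\supp D = \supp \Cfg$, or at least with $D \leq \Cfg$ and $D$ nonempty. I would prove this by induction along the execution, maintaining a sub-multiset $D_t \leq \Cfg_t$ reachable from $B$ that populates every state of $\supp \Cfg_t$. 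A step of $\Cfg_t$ that uses two agents in states $q_1,q_2$ is replayed in $D_t$ when both are present. If $q_1=q_2$ and $D_t$ holds only one copy, we use the slack coming from $|Q| < N \leq$ the coordinates of $B$: by pigeonhole, the $B$-run must contain a duplicated state, which can be repurposed. This is exactly where $|Q|<N$ enters.

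Applying this to a stable $b$-consensus $\Cfg$ reached from a large upward-invariant input gives a configuration $D$ reachable from $B$ with $D \leq \Cfg$ (up to first shrinking $\Cfg$ by snipes to one agent per support state, which preserves stability by (a)). By (a), $D$ is a stable $b$-consensus. A fair execution from $B$ with zero snipes can therefore pass through $D$ and converge to $b \neq \varphi(B)$, contradicting correctness of $P$ on $B$. Alternatively, the same simulation can be phrased with snipes from $\ICfg_0+(M,\dots,M)$: the adversary snipes the surplus agents early to reach a copy of a stable $(1-b)$-consensus obtained from $B$, contradicting (b).

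I expect the main obstacle to be the saturation step, namely making the pigeonhole argument precise so that duplicates in the small population really suffice to replay every interaction between equal states of the large population. Which direction of the contradiction (via $B$ with zero snipes, or via (b) with snipes) makes this replay cleanest is the choice I would settle first.
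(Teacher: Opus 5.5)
Your preliminary steps are sound. $B$ exists: if $\varphi(\ICfg_{N-1})=1-b$ take $B=\ICfg_{N-1}$, and otherwise non-invariance of $\ICfg_{N-1}$ supplies it. Facts (a) and (b) are also correct.

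\textbf{The gap is the saturation step, and it cannot be fixed as planned.} Your replay-plus-pigeonhole argument never uses robustness. So if it worked, it would work for every protocol with $|Q|<N$. For arbitrary protocols the claim is false.

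Take states $\{0,1,2,4,8\}$ with input state $1$ and transitions $1,1\mapsto 2,0$, $2,2\mapsto 4,0$, $4,4\mapsto 8,0$, $8,q\mapsto 8,8$. Output $1$ exactly in state $8$. The pebble sum is invariant until state $8$ appears, so this $5$-state protocol decides $x\geq 8$. Hence $N=8>|Q|$ and $B=7$. From input $8$ one reaches the configuration with every agent in state $8$. Yet no configuration reachable from $7$ agents populates state $8$, so no $D$ of the kind you want exists.

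The failure is exactly at your replay step. The large run fires $4,4\mapsto 8,0$ while the shadow run holds a single $4$. The duplicates pigeonhole gives you are agents in, e.g., state $0$, and nothing turns them into a second $4$. Your variant via (b) does not avoid this. Embedding the $B$-run into $\ICfg_0+(M,\dots,M)$ and sniping the idle surplus within the budget of $M$ snipes would force $B\geq\ICfg_0$, i.e.\ $\varphi(B)=b$. So that variant needs the same false saturation claim.

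\textbf{The missing idea is to simulate in the opposite direction, which costs nothing.} $\Cfg\steps\Cfg'$ implies $m\cdot\Cfg\steps m\cdot\Cfg'$, snipes scale likewise, and $\supp{m\cdot\Cfg}=\supp{\Cfg}$. Whether a set $S$ has an escape transition depends only on supports.

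For an a1-predicate, the paper takes the critical input with $|Q_0|+1$ agents in every input state. Fix $\ICfg'$ above it and any $\Cfg$ reachable from $\ICfg'$ with at most $n:=|Q_0|$ snipes. Scale by $m:=\max_x\ICfg(x)$ for an upward-invariant $\ICfg$. Then $m\cdot\Cfg$ is reached from $m\cdot\ICfg'$ with at most $mn$ snipes, and deleting $mn$ agents still leaves an input dominating $\ICfg$. So robustness forces $m\cdot\Cfg$ to stabilize to $1$. Therefore every $S$ with $\supp{\Cfg}\subseteq S\subseteq Q_0$ has an escape transition.

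Lemma~\ref{lem:notconfinable} then turns this into non-confinability to $Q_0$ by induction on $|S|$. To enable an escape transition $q_1,q_2\mapsto\cdots$ whose pre-set is only partly populated, one agent is set aside, counted as a snipe. The induction hypothesis for $S\setminus\{q_2\}$ then lets the remaining agents populate $q_2$ or leave $S$. The at most $|Q_0|$ snipes consumed this way are why $|Q_0|+1$ agents per input state suffice. This gives $\varphi(\ICfg')=1$ and hence $|Q|\geq|Q_0|+1\geq N$ directly. Robustness does the work that your pigeonhole step cannot.
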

\noindent This bound implies:
\begin{corollary}
\label{cor:lowerbound}
Every robust protocol deciding $x \geq k$ has at least $k$ states. In particular, the \ref{prot:tower} protocol deciding $x \geq k$ has a minimal number of states. 
\end{corollary}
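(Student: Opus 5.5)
The plan is to derive the corollary directly from Theorem~\ref{thm:lowerbound}, which we may assume. Fix $k \geq 0$ and let $\varphi$ be the predicate $x \geq k$, so that $I$ consists of the single initial state corresponding to $x$ and an input $\ICfg$ is a single number $\ICfg(x) \in \mathbb{N}$.

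First, compute $\mathcal{U}_\varphi$. If $k' \geq k$, every $k'' \geq k'$ also satisfies $k'' \geq k$, so $\varphi(k'') = 1 = \varphi(k')$ and $k'$ is upward invariant. If $k' < k$, then $\varphi(k') = 0 \neq 1 = \varphi(k)$ and $k \geq k'$, so $k'$ is not upward invariant. Hence $\mathcal{U}_\varphi = \{k' \in \mathbb{N} \mid k' \geq k\}$, as already noted in the example after Definition~\ref{def:ac}. This set is nonempty, so $\varphi$ is an ac-predicate (indeed an a1-predicate) and Theorem~\ref{thm:lowerbound} applies.

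Next, evaluate the bound. Since $I$ has a single element, $\max_{x \in I} \ICfg(x) = \ICfg(x)$. Therefore
\[
\min_{\ICfg \in \mathcal{U}_\varphi} \max_{x \in I} \ICfg(x) = \min\{k' \mid k' \geq k\} = k,
\]
and Theorem~\ref{thm:lowerbound} gives that every robust protocol deciding $x \geq k$ has at least $k$ states. For $k = 0$ the bound is vacuous, which is consistent.

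For the second claim, recall that the \ref{prot:tower} protocol has state set $\{1, \ldots, k\}$, so exactly $k$ states. Example~\ref{ex:tower} argues that it robustly decides $x \geq k$. The one point to check is that this argument also yields robustness in the stronger sense of Definition~\ref{def:robustness}, namely convergence even when the permissible output is $0$ (for $k \geq 1$).

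That check goes as follows. If after the last snipe at least $k$ agents remain, pigeonhole forces some agent to reach level $k$, and it then pulls every agent up, giving a stable $1$-consensus. Otherwise some agent may already have reached level $k$, which again forces a stable $1$-consensus; this is permissible because $n \geq k$ in that case. If no agent ever reaches level $k$, the configuration eventually stops changing with all agents below level $k$, which is a stable $0$-consensus, permissible since fewer than $k$ agents survive. Combining this with the lower bound shows that \ref{prot:tower} is state-optimal.

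The main obstacle is minor: it is only the careful verification of the robustness of \ref{prot:tower} under the stronger definition. All of the real difficulty lies in Theorem~\ref{thm:lowerbound}, which we take as given.
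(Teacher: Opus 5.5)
Your derivation of the bound is the paper's own: compute $\mathcal{U}_{x \geq k} = \{k' \in \mathbb{N} \mid k' \geq k\}$, use $|I| = 1$, and read off $\min = k$ from Theorem~\ref{thm:lowerbound}. For optimality of \ref{prot:tower}, the paper only cites Example~\ref{ex:tower}, so your extra check against Definition~\ref{def:robustness} is a bonus, but it contains a small error: an agent that reached level $k$ can be sniped before pulling anyone up, and then it does not force a $1$-consensus. The fix is to split cases on whether some \emph{surviving} agent occupies level $k$ at some point after the last snipe. If one does, the output is $1$, which is permissible because reaching level $k$ at all requires $n \geq k$: treat sniped agents as idle, so the run is dominated by a snipe-free run in which, before the first pull, the occupied levels always form a prefix $\{1,\ldots,i\}$. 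If none does, pigeonhole forces fewer than $k$ survivors, so the terminal $0$-consensus is permissible.
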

\begin{proof}
Since $x \geq k$ is an ac-predicate, Theorem \ref{thm:lowerbound} applies.
Let $\Prot=(Q,I,O,\delta)$ be an arbitrary  protocol deciding $x \geq k$. $\Prot$ has one single initial state, which we can also call $x$. So $I=\{x\}$ and the bound of Theorem \ref{thm:lowerbound} becomes $\min_{\ICfg \in \mathcal{U}_{x \geq k}} \ICfg(x)$. Since 
$\mathcal{U}_{x \geq k} = \{k' \in \mathbb{N} \mid k' \geq k\}$, we get $|Q| \geq \min \{k' \in \mathbb{N} \mid k' \geq k\} =  k$.
\end{proof}

\subsection{Proof of Theorem \ref{thm:lowerbound}.}  We only prove Theorem \ref{thm:lowerbound} for a1-predicates, the proof for a0-predicates being completely analogous. We proceed in two steps.  First, we introduce the \emph{critical input} of a protocol deciding a given a1-predicate, and show that if this input is upward invariant, then the protocol has at least the number of states of Theorem \ref{thm:lowerbound} (Proposition \ref{prop:reduction}). Then we prove that the critical input is indeed upward invariant (Proposition \ref{prop:critical}). 

\smallskip
\noindent\textbf{Notation}: We fix an a1-predicate $\varphi$ and a robust protocol $R=(Q, I, O, \delta)$ deciding $\varphi$.  We let  $Q_0$ and $Q_1$ denote the sets of states $q \in Q$ with output 0 and 1, and call them the sets of \emph{rejecting} and \emph{accepting} states of $R$. Observe that $Q = Q_0 \cup Q_1$.

\subparagraph{The critical input.} Loosely speaking, the critical input of $R$ puts in each initial state as many agents as the number of rejecting states of $R$ plus 1.

\begin{definition}
The \emph{critical input} $\ICfg_R \in \mathbb{N}^I$ of $R$ is the input $\ICfg_R$ given by $\ICfg_R(x) := |Q_0| + 1$ for all $x \in I$ (recall that $Q_0$ and $Q_1$ are the rejecting and accepting states of $R$).
\end{definition}

\begin{proposition}
\label{prop:reduction}
If $\ICfg_R$ is upward invariant, then $R$ has at least $\displaystyle \min_{\ICfg \in \mathcal{U}_{\varphi}} \max_{x \in I} \ICfg(x)$ states.
\end{proposition}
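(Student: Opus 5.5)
This is essentially a counting step once upward invariance of $\ICfg_R$ is granted. Since $\ICfg_R \in \mathcal{U}_\varphi$, the minimum in the statement is at most $\max_{x \in I} \ICfg_R(x)$. By definition of the critical input, $\ICfg_R(x) = |Q_0| + 1$ for every $x \in I$, so $\max_{x\in I}\ICfg_R(x) = |Q_0|+1$. Note that $I \neq \emptyset$, since otherwise there are no nontrivial inputs. It therefore suffices to prove
\[
|Q| \geq |Q_0| + 1, \quad\text{that is,}\quad |Q_1| \geq 1 .
\]
Indeed, $Q = Q_0 \cup Q_1$ is a disjoint union by the output function, so $|Q| = |Q_0| + |Q_1|$.

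To show $Q_1 \neq \emptyset$, I use that $\varphi$ is an a1-predicate and $\ICfg_R$ is upward invariant. Hence $\varphi(\ICfg_R) = 1$. The input $\ICfg_R$ is nonempty, since it puts $|Q_0|+1 \geq 1$ agents in each initial state. Because $R$ decides $\varphi$, which is the case of zero snipes in Definition \ref{def:robustness}, every fair execution from $\ICfg_R$ reaches a stable $1$-consensus. That configuration is nonempty, because transitions preserve the number of agents. It therefore populates some state $q$ with $O(q)=1$, so $q \in Q_1$. This gives $|Q| \geq |Q_0|+1 = \max_{x\in I}\ICfg_R(x) \geq \min_{\ICfg \in \mathcal{U}_\varphi}\max_{x\in I}\ICfg(x)$.

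I do not expect a real obstacle here. The only points needing care are the existence of a fair execution from $\ICfg_R$ (standard, e.g.\ one that repeatedly visits all reachable configurations of the finite reachable set) and the nonemptiness of $I$. The substantive work, showing that $\ICfg_R$ is in fact upward invariant, is deferred to Proposition \ref{prop:critical}.
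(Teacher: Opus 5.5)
Your proposal is correct and follows the paper's proof: $|Q| \geq |Q_0|+1$ because an accepting state exists, $|Q_0|+1 = \max_{x\in I}\ICfg_R(x)$ by definition of the critical input, and $\ICfg_R \in \mathcal{U}_\varphi$ bounds the minimum. Your fair-execution argument from $\ICfg_R$ for $Q_1 \neq \emptyset$ simply spells out what the paper states in one line ("since $R$ decides an a1-predicate, it has at least one accepting state").
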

\begin{proof}
Since $R$ decides an a1-predicate, it has at least one accepting state, and so $|Q| \geq |Q_0| + 1$.  By definition, $|Q_0|+ 1=\ICfg_R(x)$ for every $x \in I$, and so $|Q_0| + 1 = \max_{x \in I} \ICfg_R(x)$. Since  $\ICfg_R$ is upward invariant, we have $\ICfg_R \in \mathcal{U}_{\varphi}$ and so $\max_{x \in I} \ICfg_R(x) \geq  \min_{\ICfg \in \mathcal{U}_{\varphi}} \max_{x \in I} \ICfg(x)$.
\end{proof}

\subparagraph{The critical input is upward invariant.}
We first characterize the set of inputs of $R$ with output 1 (Lemma \ref{lem:charout1}).

\begin{definition}
Let $Q' \subseteq Q$ be a set of states and let $\Cfg$ be a configuration.  
\begin{itemize}
\item $\Cfg$ is \emph{confined to $Q'$} if $\supp{\Config{D}} \subseteq Q'$ for every configuration $\Config{D}$ reachable from $\Cfg$.
\item $\Cfg$ is \emph{confinable to $Q'$} if some configuration reachable from $\Cfg$ is confined to $Q'$. 
\end{itemize}
\end{definition}

\begin{lemma}
\label{lem:charout1}
An input $\ICfg$ of $R$ has output 1 if{}f $\ICfg$ is \emph{not} confinable to $Q_0$.
\end{lemma}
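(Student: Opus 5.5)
We prove both directions using only that $R$ decides $\varphi$ in the failure-free sense, i.e.\ every fair execution from an input $\ICfg$ reaches a stable $\varphi(\ICfg)$-consensus. Robustness is not needed here; it will enter later, when this characterization is used to show that the critical input is upward invariant.

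\emph{If $\ICfg$ has output 0, then $\ICfg$ is confinable to $Q_0$.} Pick any fair execution from $\ICfg$; one exists, for instance by repeatedly scheduling enabled transitions in round-robin over the finite set of reachable configurations. Since $R$ decides $\varphi$, this execution reaches a stable $0$-consensus $\Cfg$. By definition, every configuration reachable from $\Cfg$ is a $0$-consensus, i.e.\ has support contained in $Q_0$. Hence $\Cfg$ is confined to $Q_0$, and since $\ICfg \steps \Cfg$, the input $\ICfg$ is confinable to $Q_0$.

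\emph{If $\ICfg$ is confinable to $Q_0$, then $\ICfg$ has output 0.} Let $\ICfg \steps \Cfg$ with $\Cfg$ confined to $Q_0$. We build a fair execution that passes through $\Cfg$: first follow a finite path from $\ICfg$ to $\Cfg$, then continue with any fair execution from $\Cfg$. This is fair, because fairness only constrains configurations occurring infinitely often, and these all lie in the suffix, which is fair by choice. Every configuration in the suffix is reachable from $\Cfg$, so its support lies in $Q_0$ and it is a $0$-consensus. The execution converges to $\varphi(\ICfg)$, so it eventually reaches a $\varphi(\ICfg)$-consensus. That configuration lies in the suffix and is nonempty, assuming $\ICfg \neq \mathbf{0}$, so its output value must be $0$. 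Hence $\varphi(\ICfg)=0$.

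Combining the two directions, $\ICfg$ has output $1$ if{}f it is not confinable to $Q_0$. The only delicate points are the existence of fair executions through a prescribed finite prefix, and the degenerate empty input. For the latter we simply note that consensus is vacuous there and the lemma concerns nonempty inputs. Otherwise the argument is routine.
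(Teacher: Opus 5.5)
Your proof is correct and is essentially the paper's argument: the paper notes that (1) if $\ICfg$ is confinable to $Q_b$ then $\varphi(\ICfg)=b$, and (2) every input is confinable to $Q_0$ or to $Q_1$, which are exactly your two directions, and it leaves implicit the fair-execution details you supply. One small wording point: the stable consensus might already be reached inside the finite prefix, but every configuration reachable from a stable $r$-consensus is again one, so you can always take it in the suffix and the argument is unaffected.
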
 
\begin{proof}
Since $R$ decides $\varphi$, for every input $\ICfg$ of $R$ and every $b \in \{0,1\}$ we have: (1) if $\ICfg$ is confinable to $Q_b$, then $\varphi(\ICfg) = b$, and (2)  $\ICfg$ is either confinable to $Q_0$ or confinable to $Q_1$. The result follows.
\end{proof}

Lemma \ref{lem:charout1} allows us to redefine our goal: instead of proving that the critical input is upward invariant, i.e., that $\ICfg \geq \ICfg_R$ implies $\varphi(\ICfg)=\varphi(\ICfg_R)=1$, we equivalently prove that $\ICfg \geq \ICfg_R$ implies that $\ICfg$ is not confinable to $Q_0$. We do this in Proposition \ref{prop:critical}. First, we obtain  a sufficient condition for ``non-confinedness'' in Lemma \ref{lem:notconfinable}. 

\begin{definition}
Let $Q' \subseteq Q$ be a set of states. A transition $t$ of $R$ is an \emph{escape transition} of $Q'$ if
$\supp{\mathrm{pre}(t)} \subseteq Q'$ and $\supp{\mathrm{post}(t)} \nsubseteq Q'$.
\end{definition}

\begin{lemma}
\label{lem:notconfinable}
Let $Q' \subseteq Q$ and let $n := |Q'|$. Let $\Iconf{D}$ be a configuration such that (1) $|\Iconf{D}| > n$, and 
(2) for every configuration $\Conf{C}$ satisfying $\Iconf{D} \snipereach{n} \Conf{C}$, every set of states $S$ with $\supp{C} \subseteq S \subseteq Q'$ has an escape transition. Then $\Iconf{D}$ is not confined to $Q'$.
\end{lemma}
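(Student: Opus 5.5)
The plan is to argue by contradiction: assume $\Iconf{D}$ is confined to $Q'$, and then build, using at most $n$ snipes, a configuration from which an escape transition is actually enabled. Firing it produces a configuration whose support leaves $Q'$.

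\emph{Step 1: confinement is preserved by snipes.} If $\Conf{E}$ is confined to $Q'$ and $\Conf{E} \snipe \Conf{E}'$, then $\Conf{E}'$ is still confined to $Q'$. This holds by monotonicity of population protocols. If $\Conf{E}' \steps \Conf{F}$, then $\Conf{E} = \Conf{E}' + \multiset{p} \steps \Conf{F} + \multiset{p}$, because the extra agent can simply sit idle. Since $\Conf{E}$ is confined, $\supp{\Conf{F}} \subseteq \supp{\Conf{F}+\multiset{p}} \subseteq Q'$. Confinement is also trivially preserved by ordinary steps. Hence, by induction, every $\Conf{C}$ with $\Iconf{D} \snipereach{n} \Conf{C}$ is confined to $Q'$.

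\emph{Step 2: a sniping procedure that reaches an enabled escape transition.} I will construct a sequence $\Iconf{D} = \Conf{C}_0 \snipe \Conf{C}_1 \snipe \cdots$ that maintains the invariant $\Abs{\Conf{C}_j} > \Abs{\supp{\Conf{C}_j}}$. The invariant holds at the start because $\Abs{\Iconf{D}} > n \geq \Abs{\supp{\Iconf{D}}}$, where the last inequality uses confinement of $\Iconf{D}$. Given $\Conf{C}_j$, apply hypothesis (2) with $S = \supp{\Conf{C}_j} \subseteq Q'$ to obtain an escape transition $t$ with $\supp{\pre(t)} \subseteq S$. If $\pre(t) \leq \Conf{C}_j$, then $t$ is enabled; we stop. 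Otherwise $\pre(t)$ must be of the form $\multiset{q,q}$ with $\Conf{C}_j(q) = 1$, since any other pre-multiset with support in $S$ is automatically covered. In that case, snipe the unique agent in $q$ to obtain $\Conf{C}_{j+1}$. Both the size and the support shrink by exactly one, so the invariant is kept.

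\emph{Step 3: termination.} The support strictly decreases with each snipe. The invariant forbids an empty support, because that would force $0 > 0$. Hence the procedure stops after fewer than $\Abs{\supp{\Iconf{D}}} \leq n$ snipes, so the reached configuration $\Conf{C}$ satisfies $\Iconf{D} \snipereach{n} \Conf{C}$ and the hypothesis legitimately applies at every stage. At the stopping point an escape transition $t$ is enabled at $\Conf{C}$. Firing it yields a configuration containing a state outside $Q'$. This contradicts Step 1, which says $\Conf{C}$ is confined to $Q'$.

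The main obstacle is the subtlety that an escape transition of $\supp{\Conf{C}}$ need not be enabled, because of the case where $\pre(t)$ has two agents in one state. Step 2 resolves it: the only failure case is a singly occupied state $q$, and sniping that agent is cheap, since it preserves the surplus of agents over support size. That surplus is exactly what hypothesis (1) supplies and what bounds the number of snipes by $n$.
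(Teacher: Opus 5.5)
Your Step 1 is correct. Steps 2--3 also correctly show that after fewer than $n$ snipes you reach a configuration $C$ at which some escape transition $t$ of $S=\supp{C}$ is enabled. The gap is the final inference. Hypothesis (2) applied to $S=\supp{C}$ gives an escape transition \emph{of $S$}, i.e.\ $\supp{\post(t)}\nsubseteq \supp{C}$. Firing $t$ therefore populates a state outside $\supp{C}$, but that state may lie in $Q'\setminus\supp{C}$, which does not contradict confinement to $Q'$.

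For instance, take $Q'=\{a,b\}$, transitions $a,a\mapsto b,b$, $b,b\mapsto a,a$ and $a,b\mapsto c,c$ with $c\notin Q'$, and $D=\multiset{3\cdot a}$. Hypotheses (1) and (2) hold. Your procedure stops immediately with $a,a\mapsto b,b$ enabled, and firing it yields $\multiset{a,2\cdot b}$, still inside $Q'$. If instead you apply (2) with $S=Q'$, the escape transition's $\pre$ may contain unpopulated states of $Q'$, and your covering argument in Step 2 no longer applies. So the $\multiset{q,q}$ issue you resolve is not the real obstacle.

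The missing idea is how to chain escapes of smaller sets into an escape from $Q'$ itself. This requires populating the two states of $\pre(t)$ \emph{simultaneously}. The paper proves by induction on $|S|$ that for every $S\subseteq Q'$, every $C$ with $D\snipereach{n-|S|}C$ can populate a state outside $S$. In the inductive step, assume $\supp{C}\subseteq S$ and let $t$ be an escape transition of $S$ with $\pre(t)=\multiset{q_1,q_2}$. If needed, one first populates $q_1$ by applying the induction hypothesis to $C$ minus one agent and to $S\setminus\{q_1\}$. Then one parks one agent in $q_1$ and applies the hypothesis to the remaining agents and $S\setminus\{q_2\}$ to populate $q_2$. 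Lifting back by monotonicity, either $t$ becomes enabled or a state outside $S$ has already been reached.

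Each agent set aside in these sub-arguments counts as a snipe. That is why the budget is $n-|S|$, and why (2) is needed for all intermediate sets $S$, not just for the support. Taking $S=Q'$ with zero snipes gives the lemma directly, with no contradiction argument needed.
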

\begin{proof}
Given $k \geq 0$, we say that $\Iconf{D}$ is \emph{$k$-snipe-confinable to $Q'$} if some configuration $\Conf{C}$ such that $\Iconf{D} \snipereach{k} \Conf{C}$ is confined to $Q'$. 
Observe that $\Iconf{D}$ is confined to $Q'$ if{}f $\Iconf{D}$ is $0$-snipe-confinable to $Q'$. Let $\Iconf{D}$ be a configuration satisfying the conditions of the lemma.
We claim: 

\smallskip\parbox{0.9\textwidth}{
For all $i \leq n$ and all $S \subseteq Q'$ of size at most $i$: $\Iconf{D}$ is not $(n-i)$-snipe-confinable to $S$. }

\smallskip\noindent Once the claim is proved, the lemma follows by taking $i:=n$. Indeed, in this case $Q'$ itself has size at most $i$, and so we can set $S:=Q'$. This yields that $\Iconf{D}$ is not $0$-snipe-confinable to $Q'$, and so that $\Iconf{D}$ is not confined to $S$. We prove the claim by induction on $i$:

\smallskip\noindent\textbf{Base $i=0$.} In this case $S=\emptyset$. Since $|\Iconf{D}| > n$ by assumption (1), every configuration $\Conf{C}$ such that $\Iconf{D} \snipereach{n} \Conf{C}$ satisfies $|\Conf{C}| > 0$, and so $C$ is not confined to the empty set.

\smallskip\noindent\textbf{Step $ i \to i+1$.}
Let $S$ be an arbitrary set of states of size at most $i+1$, and let $\Conf{C}$ be a configuration with $\Iconf{D} \snipereach{n - (i+1)} \Conf{C}$. We prove that $\Conf{C}$ is not confined to $S$, i.e., that some configuration $\Conf{B}$ reachable from $\Conf{C}$ satisfies $\Conf{B}(q)>0$ for some $q \notin S$. From now on, we say that $\Conf{B}$ \emph{populates $q$}, and that $\Conf{C}$ \emph{can populate $q$}; in particular, $\Conf{C}$ is not confined to $S$ if{}f it can populate some state outside $S$.

If $\supp{C} \nsubseteq S$, then $\Conf{C}$ already populates a state outside $S$.
Otherwise $\supp{C} \subseteq S \subseteq Q'$, and so by our assumption $S$ has an escape transition $t$.
There are two cases: $|\supp{\mathrm{pre}(t)}| = 2$ and $|\supp{\mathrm{pre}(t)}| = 1$. We only consider the first one, the second being analogous. If $|\supp{\mathrm{pre}(t)}| = 2$, then $\mathrm{pre}(t) = \multiset{q_1, q_2 }$ with $q_1 \neq q_2$. There are three possible cases:
\begin{enumerate}
\item If $C(q_1) = C(q_2) = 0$, define $S' := S \setminus \{q_1\}$ and let $\Conf{C}'$ be any configuration 
obtained by sniping one agent from $\Conf{C}$. Then $\Iconf{D} \snipereach{n-(i+1)} \Conf{C} \snipereach{1} \Conf{C}'$ and so $\Iconf{D} \snipereach{n-i} \Conf{C}'$. By induction hypothesis applied to $\Conf{C}'$ and $S'$, the configuration $\Conf{C}'$
can populate a state in $Q \setminus S'$. Since $\Conf{C} \geq \Conf{C}'$, this state can also be populated from
$\Conf{C}$. If the state is $q_1$, we proceed with (2) or (3); otherwise we have already populated
a state outside $S$.
\item If $\Conf{C}$ populates either $q_1$ or $q_2$ (w.l.o.g.\ $q_1$), define
$S' := S \setminus \{q_2\}$ and let $\Conf{C}' := \Conf{C} - \multiset{q_1}$. As in the previous case we have 
$\Iconf{D} \snipereach{n-i} \Conf{C}'$, and so by induction hypothesis applied to $\Conf{C}'$ and $S'$
some configuration $\Conf{C}''$ reachable from $\Conf{C}'$ populates some state outside $S'$, which is either $q_2$ or some state outside $S$. It follows $\Conf{C} = \Conf{C}' + \multiset{q_1} \to^\star \Conf{C}'' + \multiset{q_1}$,  and so $\Conf{C}$ can either populate a state outside $S$, in which case we are done, or populate $q_2$ in addition to the agent already in \(q_1\), in which case we proceed with (3).

%\noindent\smallskip\textbf{Case $|\supp{\mathrm{pre}(t)}| = 1$.} Then $\mathrm{pre}(t) = \multiset{q_1,q_1}$ for some state $q_1$. Define $S' := S \setminus \{q_1\}$. Analogous to the previous case.

%\begin{enumerate}
%\item If $C(q_1) = 0$, consider a configuration $\Conf{C}'$ obtained by sniping one agent from
%$\Conf{C}$. Then $\Iconf{D} \snipereach{n-(i+1)} \Conf{C} \snipe \Conf{C}'$ and so
%$\Iconf{D} \snipereach{n-i} \Conf{C}'$. By induction hypothesis applied to $\Conf{C}'$ and $S'$,
%the configuration $\Conf{C}'$ can populate a state in $Q \setminus S'$. Since $\Conf{C} \geq \Conf{C}'$, that state can also be populated from $\Conf{C}$. If the state is $q_1$, we proceed with (2) or (3); otherwise, we have already populated a state outside $S$.
%
%\item If $C(q_1) = 1$, consider $\Conf{C}' := \Conf{C} - \multiset{q_1, q_1}$. Again
%$\Iconf{D} \snipereach{n-i} \Conf{C}'$, so we apply the induction hypothesis to $\Conf{C}'$ and $S'$ to
%obtain a configuration $\Conf{C}''$ that populates a state outside $S'$. By monotonicity,
%$
%\Conf{C} = \Conf{C}' + \multiset{q_1, q_1} \to \Conf{C}'' +\multiset{q_1, q_1}$, and so $\Conf{C}$ can either populate a state outside $S$ or put at least two agents in $q_1$; in this case we 
%proceed with (3).
%
%\item If $C(q_1) \geq 2$, then $t$ is enabled and firing $t$ leads to a configuration populating a state outside $S$.
%\end{enumerate}

\item If $\Conf{C}$ populates both $q_1$ and $q_2$ then the escape transition $t$ is enabled, and firing
$t$ leads to a configuration populating a state outside $S$.
\end{enumerate}
\end{proof}

\begin{proposition}
\label{prop:critical}
The critical input $\ICfg_R$ of $R$ is upward invariant.
\end{proposition}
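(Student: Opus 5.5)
Following the reformulation after Lemma~\ref{lem:charout1}, it suffices to show that no input $\ICfg \geq \ICfg_R$ is confinable to $Q_0$. Equivalently, every configuration $\Iconf{D}$ reachable from such an $\ICfg$ is not confined to $Q_0$. I will derive this from Lemma~\ref{lem:notconfinable} with $Q' := Q_0$ and $n := |Q_0|$. Fix $\ICfg \geq \ICfg_R$ and $\Iconf{D}$ with $\ICfg \steps \Iconf{D}$. Condition (1) is immediate: $|\Iconf{D}| = |\ICfg| \geq |Q_0|+1 > n$, since $I$ is nonempty and every initial state carries at least $|Q_0|+1$ agents.

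The heart of the argument is condition (2). Let $\Conf{C}$ satisfy $\Iconf{D} \snipereach{n} \Conf{C}$, and let $S$ satisfy $\supp{C} \subseteq S \subseteq Q_0$. Suppose for contradiction that $S$ has no escape transition. Then every transition enabled in a configuration with support in $S$ stays in $S$, so $\Conf{C}$ is confined to $S \subseteq Q_0$, i.e.\ $\Conf{C}$ is a stable $0$-consensus. Now build an execution of $R$ with at most $n$ snipes: start at $\ICfg$, move to $\Iconf{D}$, perform the at most $n$ snipes interleaved with steps so as to reach $\Conf{C}$, and then continue fairly. Every configuration from then on is a $0$-consensus, so the execution converges to $0$. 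The number of snipes is $j \leq n = |Q_0| < |\ICfg|$, so robustness applies. It forces $0$ to be permissible, that is, $\varphi(\ICfg') = 0$ for some $\ICfg' \leq \ICfg$ with $|\ICfg - \ICfg'| \leq j \leq |Q_0|$.

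To obtain the contradiction, I must pick the snipes so that every such $\ICfg'$ has output $1$, and this is the main obstacle. Removing at most $|Q_0|$ agents from $\ICfg \geq \ICfg_R$ can leave some coordinate below $|Q_0|+1$, so $\ICfg'$ need not dominate $\ICfg_R$. Moreover, we do not yet know that $\ICfg_R$ is upward invariant, which is exactly what we are proving. My plan is to instead prove the stronger statement directly: for all $\ICfg$ in the upward closure of some fixed upward-invariant input $\ICfg^*$, which exists since $\varphi$ is an a1-predicate, with sufficiently large coordinates, the configuration is not confinable. I would first apply the above argument to inputs $\ICfg \geq \ICfg^* + |Q_0|\cdot\mathbf{1}$. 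Every $\ICfg'$ obtained by removing at most $|Q_0|$ agents still dominates $\ICfg^*$, hence has $\varphi(\ICfg')=1$, which is the desired contradiction. So all such inputs are not confinable to $Q_0$.

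It remains to lower the coordinates down to $|Q_0|+1$. The hope is to run a downward induction that uses Lemma~\ref{lem:notconfinable} with the snipes realizing the removal of the surplus agents of a large input. The idea is to view a smaller input $\ICfg \geq \ICfg_R$ as a large input from which agents were sniped at the start, so that the robustness guarantee on the large input constrains the small one. The delicate point is matching the snipe budget to the induction in Lemma~\ref{lem:notconfinable}. Only $|Q_0|$ snipes are ever needed, and each initial state has $|Q_0|+1$ agents, so at least one agent per initial state survives. I expect this bookkeeping to be where the proof is really decided, and I would try first to argue per initial state that the escape-transition condition transfers.
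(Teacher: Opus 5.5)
\textbf{There is a genuine gap in premise (2).} Your reduction to Lemma~\ref{lem:notconfinable} with $Q' := Q_0$ and $n := |Q_0|$ matches the paper, and so does your treatment of premise (1). You also correctly see why the direct argument for premise (2) fails: after at most $n$ snipes from $\ICfg \geq \ICfg_R$, the inputs $\ICfg'$ that robustness allows need not dominate any upward-invariant input. But the proposal stops at that point. The downward induction is never given, and as sketched it does not work. If you view a small input as a large one whose surplus agents are sniped at the start, the snipe budget grows by the size of the surplus. Robustness of the large input then only guarantees $\varphi(\ICfg'')$ for some input $\ICfg''$ that may lie far below every upward-invariant input, so it constrains nothing. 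Your preliminary step for inputs $\ICfg \geq \ICfg^* + |Q_0|\cdot\mathbf{1}$ has the same limitation. It gives escape transitions only for supports reachable from those large inputs with $n$ snipes, and nothing links these to the supports that arise from inputs near $\ICfg_R$.

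\textbf{The missing idea is scaling.} Fix an upward-invariant $\ICfg^*$ and choose $m \geq 1$ with $m \geq \ICfg^*(x)$ for all $x \in I$. Given $\ICfg \steps \Conf{D} \snipereach{n} \Conf{C}$, run $m$ copies in lockstep. This gives $m\cdot\ICfg \steps m\cdot\Conf{D} \snipereach{mn} m\cdot\Conf{C}$, with every step and every snipe replicated $m$ times. For every $x \in I$ we have $(m\cdot\ICfg)(x) \geq m(n+1) = mn+m$. So removing at most $mn$ agents leaves at least $m \geq \ICfg^*(x)$ agents in each initial state. Hence every permissible output for $m\cdot\ICfg$ with $mn$ snipes is $1$. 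On the other side, whether $S$ has an escape transition depends only on supports, and $\supp{m\cdot\Conf{C}} = \supp{\Conf{C}} \subseteq S$. If $S$ had no escape transition, $m\cdot\Conf{C}$ would be confined to $S \subseteq Q_0$, and a fair continuation would converge to $0$, contradicting robustness. This proves premise (2) directly for every $\ICfg \geq \ICfg_R$, with no induction on the coordinates. The choice of $|Q_0|+1$ agents per initial state is exactly what keeps the ratio of tolerated snipes to agents unchanged under this scaling.
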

\begin{proof}
Let $\ICfg'_R \geq \ICfg_R$ arbitrary. We prove $\varphi(\ICfg_R')=\varphi(\ICfg_R)=1$, which shows that $\ICfg_R$ is upward invariant. 

\smallskip\noindent \textbf{Claim I}:  Every configuration \(\Conf{D}\) reachable from $\ICfg'_R$ (without snipes, i.e., \(\ICfg'_R \steps \Config{D}\)) satisfies premises (1) and (2) of Lemma \ref{lem:notconfinable} for $Q':=Q_0$. \\
 The first premise is $|\Conf{D}| > |Q_0|$. It follows from $|\ICfg_R'| \geq |\ICfg_R|> |Q_0|$, which holds by the definition of the critical input, and $|\ICfg_R'| = |\Conf{D}|$, which holds because \(\ICfg_R' \steps \Config{D}\). For the second premise, let $\Conf{C}$ be any configuration satisfying $\Conf{D} \snipereach{n} \Conf{C}$, and let $S$ be any set of states such that $\supp{C} \subseteq S \subseteq Q_0$. We prove that $S$ has an escape transition. 
Let $\ICfg$ be any upward invariant input of $\varphi$ (which exists because $\varphi$ is an ac-predicate by assumption) and let $m := \max_{x \in I} \ICfg(x)$. Since $\ICfg'_R \to^\ast \Conf{D} \snipereach{n} \Conf{C}$, we have $m \cdot \ICfg'_R \to^{\ast} m \cdot \Conf{D} \snipereach{mn} m \cdot \Conf{C}$. 

\smallskip\noindent \textbf{Claim II} (subclaim of claim I): The configuration $m \cdot \Conf{C}$ stabilizes to $1$. \\
For the proof, remember that \(n:=|Q_0|\). We have $(m \cdot \ICfg'_R)(x) \geq m (|Q_0|+1)=mn+m$  for all $x \in I$. Therefore, any configuration $E$ obtained by removing $mn$ agents from $m \cdot \ICfg'_R$ has at least $m$ agents in each input state and is therefore greater than or equal to $\ICfg$. Since $\ICfg$ is upward invariant, we have $\varphi(E)=1$. By robustness, any configuration  reached from $m \cdot \ICfg'_R$ with  at most \(mn\) snipes  stabilizes to $1$. In particular, since $m \cdot \ICfg'_R \to^{\ast} m \cdot \Conf{D} \snipereach{mn} m \cdot \Conf{C}$, the configuration $m \cdot \Conf{C}$ stabilizes to $1$, and claim II is proved.

\smallskip \noindent Since $\supp{ m \cdot \Conf{C} } = \supp{C}$, we have $\supp{m \cdot \Conf{C}} \subseteq S \subseteq Q_0$, i.e., $m \cdot \Conf{C}$ only populates states of $Q_0$. But then $S$ must have an escape transition: Otherwise $m \cdot \Conf{C}$ would be confined to $S \subseteq Q_0$, which would imply that \(m \cdot \Conf{C}\) stabilizes to \(0\), contradicting claim II. Hence claim I is proved.

\smallskip\noindent Since $\Conf{D}$ satisfies the two premises of Lemma \ref{lem:notconfinable} for $Q':=Q_0$, we can apply the lemma, yielding that $\Conf{D}$ is not confined to $Q_0$. Since \(\Conf{D}\) is reachable from \(\ICfg'_R\), it follows that \(\ICfg'_R\) is not confinable to \(Q_0\), and so that it cannot stabilize to output $0$. Since $R$ decides a predicate, $\ICfg_R'$ must therefore stabilize to $1$, and so $\varphi(\ICfg_R')=1$, and we are done.
\end{proof}

\begin{proof}[Proof of Theorem \ref{thm:lowerbound}]
Let $R$ be a robust protocol deciding $\varphi$.
By Lemma \ref{lem:upwardinvariant}, all upward-invariant inputs of $\varphi$ have the same output. W.l.o.g, let this output be $1$. By Proposition \ref{prop:critical},  the critical input of $R$ is upward invariant, and the result follows by Proposition \ref{prop:reduction}.
\end{proof}

%% file: sec-conclusion.tex
\section{Conclusions}
\label{sec:conclusions}
We have proved that every monadic Presburger predicate can be computed by a robust population protocol exhibiting optimal fault-tolerance against crash failures. Further, we have shown that achieving robustness has at least double exponential cost in terms of state complexity. Finally, we have given optimal robust protocols for the predicates $x \geq k$. In future work  we expect to decide whether all Presburger predicates admit a robust protocol, and determine the exact cost of robustness.

%% file: robust-population-protocols.bib
@inproceedings{HagueLRW20,
  author       = {Matthew Hague and
                  Anthony W. Lin and
                  Philipp R{\"{u}}mmer and
                  Zhilin Wu},
  title        = {Monadic Decomposition in Integer Linear Arithmetic},
  booktitle    = {{IJCAR} {(1)}},
  series       = {Lecture Notes in Computer Science},
  pages        = {122--140},
  publisher    = {Springer},
  year         = {2020}
}

@article{LunaFIISV20,
  author       = {Giuseppe Antonio Di Luna and
                  Paola Flocchini and
                  Taisuke Izumi and
                  Tomoko Izumi and
                  Nicola Santoro and
                  Giovanni Viglietta},
  title        = {Fault-tolerant simulation of population protocols},
  journal      = {Distributed Comput.},
  volume       = {33},
  number       = {6},
  pages        = {561--578},
  year         = {2020}
}

@article{LunaFIISV19,
  author       = {Giuseppe Antonio Di Luna and
                  Paola Flocchini and
                  Taisuke Izumi and
                  Tomoko Izumi and
                  Nicola Santoro and
                  Giovanni Viglietta},
  title        = {Population protocols with faulty interactions: The impact of a leader},
  journal      = {Theor. Comput. Sci.},
  volume       = {754},
  pages        = {35--49},
  year         = {2019}
}

@inproceedings{Leroux22,
  author       = {J{\'{e}}r{\^{o}}me Leroux},
  title        = {State Complexity of Protocols with Leaders},
  booktitle    = {{PODC}},
  pages        = {257--264},
  publisher    = {{ACM}},
  year         = {2022}
}

@article{CzernerEL23,
  author       = {Philipp Czerner and
                  Javier Esparza and
                  J{\'{e}}r{\^{o}}me Leroux},
  title        = {Lower bounds on the state complexity of population protocols},
  journal      = {Distributed Comput.},
  volume       = {36},
  number       = {3},
  pages        = {209--218},
  year         = {2023}
}

@inproceedings{BlondinEJ18,
  author       = {Michael Blondin and
                  Javier Esparza and
                  Stefan Jaax},
  title        = {Large Flocks of Small Birds: on the Minimal Size of Population Protocols},
  booktitle    = {{STACS}},
  series       = {LIPIcs},
  volume       = {96},
  pages        = {16:1--16:14},
  publisher    = {Schloss Dagstuhl - Leibniz-Zentrum f{\"{u}}r Informatik},
  year         = {2018}
}

@article{Haase18,
  author       = {Christoph Haase},
  title        = {A survival guide to {P}resburger arithmetic},
  journal      = {{ACM} {SIGLOG} News},
  volume       = {5},
  number       = {3},
  pages        = {67--82},
  year         = {2018}
}

@article{ElsasserR18,
  author       = {Robert Els{\"{a}}sser and
                  Tomasz Radzik},
  title        = {Recent Results in Population Protocols for Exact Majority and Leader
                  Election},
  journal      = {Bull. {EATCS}},
  volume       = {126},
  year         = {2018}
}

@article{AlistarhG18,
  author       = {Dan Alistarh and
                  Rati Gelashvili},
  title        = {Recent Algorithmic Advances in Population Protocols},
  journal      = {{SIGACT} News},
  volume       = {49},
  number       = {3},
  pages        = {63--73},
  year         = {2018}
}

@incollection{AspnesR09,
  author       = {James Aspnes and
                  Eric Ruppert},
  title        = {An Introduction to Population Protocols},
  booktitle    = {Middleware for Network Eccentric and Mobile Applications},
  pages        = {97--120},
  publisher    = {Springer},
  year         = {2009}
}

@inproceedings{LossinCEGP25,
  author       = {Benno Lossin and
                  Philipp Czerner and
                  Javier Esparza and
                  Roland Guttenberg and
                  Tobias Prehn},
  title        = {The Black Ninjas and the Sniper: On Robust Population Protocols},
  booktitle    = {Principles of Verification {(3)}},
  series       = {Lecture Notes in Computer Science},
  volume       = {15262},
  pages        = {206--233},
  publisher    = {Springer},
  year         = {2024}
}

@article{AngluinADFP06,
  author       = {Dana Angluin and
                  James Aspnes and
                  Zo{\"{e}} Diamadi and
                  Michael J. Fischer and
                  Ren{\'{e}} Peralta},
  title        = {Computation in networks of passively mobile finite-state sensors},
  journal      = {Distributed Comput.},
  volume       = {18},
  number       = {4},
  pages        = {235--253},
  year         = {2006},
  url          = {https://doi.org/10.1007/s00446-005-0138-3},
  doi          = {10.1007/S00446-005-0138-3},
  bibsource    = {dblp computer science bibliography, https://dblp.org}
}

@article{CzernerGHE24,
  author       = {Philipp Czerner and
                  Roland Guttenberg and
                  Martin Helfrich and
                  Javier Esparza},
  title        = {Fast and succinct population protocols for Presburger arithmetic},
  journal      = {J. Comput. Syst. Sci.},
  volume       = {140},
  pages        = {103481},
  year         = {2024}
}

@article{AngluinAER07,
  author       = {Dana Angluin and
                  James Aspnes and
                  David Eisenstat and
                  Eric Ruppert},
  title        = {The computational power of population protocols},
  journal      = {Distributed Comput.},
  volume       = {20},
  number       = {4},
  pages        = {279--304},
  year         = {2007}
}

@inproceedings{Delporte-GalletFGR06,
  author       = {Carole Delporte{-}Gallet and
                  Hugues Fauconnier and
                  Rachid Guerraoui and
                  Eric Ruppert},
  editor       = {Phillip B. Gibbons and
                  Tarek F. Abdelzaher and
                  James Aspnes and
                  Ramesh R. Rao},
  title        = {When Birds Die: Making Population Protocols Fault-Tolerant},
  booktitle    = {Distributed Computing in Sensor Systems, Second {IEEE} International
                  Conference, {DCOSS} 2006, San Francisco, CA, USA, June 18-20, 2006,
                  Proceedings},
  series       = {Lecture Notes in Computer Science},
  volume       = {4026},
  pages        = {51--66},
  publisher    = {Springer},
  year         = {2006},
  url          = {https://doi.org/10.1007/11776178\_4},
  doi          = {10.1007/11776178\_4},
  bibsource    = {dblp computer science bibliography, https://dblp.org}
}

@inproceedings{GuerraouiR09,
  author       = {Rachid Guerraoui and
                  Eric Ruppert},
  editor       = {Susanne Albers and
                  Alberto Marchetti{-}Spaccamela and
                  Yossi Matias and
                  Sotiris E. Nikoletseas and
                  Wolfgang Thomas},
  title        = {Names Trump Malice: Tiny Mobile Agents Can Tolerate Byzantine Failures},
  booktitle    = {Automata, Languages and Programming, 36th International Colloquium,
                  {ICALP} 2009, Rhodes, Greece, July 5-12, 2009, Proceedings, Part {II}},
  series       = {Lecture Notes in Computer Science},
  volume       = {5556},
  pages        = {484--495},
  publisher    = {Springer},
  year         = {2009},
  url          = {https://doi.org/10.1007/978-3-642-02930-1\_40},
  doi          = {10.1007/978-3-642-02930-1\_40},
  bibsource    = {dblp computer science bibliography, https://dblp.org}
}

@article{AlistarhDKSU17,
  author       = {Dan Alistarh and
                  Bartlomiej Dudek and
                  Adrian Kosowski and
                  David Soloveichik and
                  Przemyslaw Uznanski},
  title        = {Robust Detection in Leak-Prone Population Protocols},
  journal      = {CoRR},
  volume       = {abs/1706.09937},
  year         = {2017},
  url          = {http://arxiv.org/abs/1706.09937},
  eprinttype    = {arXiv},
  eprint       = {1706.09937},
  bibsource    = {dblp computer science bibliography, https://dblp.org}
}

@inproceedings{Alistarh0U21,
  author       = {Dan Alistarh and
                  Martin T{\"{o}}pfer and
                  Przemyslaw Uznanski},
  editor       = {Avery Miller and
                  Keren Censor{-}Hillel and
                  Janne H. Korhonen},
  title        = {Comparison Dynamics in Population Protocols},
  booktitle    = {{PODC} '21: {ACM} Symposium on Principles of Distributed Computing,
                  Virtual Event, Italy, July 26-30, 2021},
  pages        = {55--65},
  publisher    = {{ACM}},
  year         = {2021},
  url          = {https://doi.org/10.1145/3465084.3467915},
  doi          = {10.1145/3465084.3467915},
  bibsource    = {dblp computer science bibliography, https://dblp.org}
}

@inproceedings{Czerner24,
  author       = {Philipp Czerner},
  editor       = {Dan Alistarh},
  title        = {Breaking Through the {\(\Omega\)}(n)-Space Barrier: Population Protocols
                  Decide Double-Exponential Thresholds},
  booktitle    = {38th International Symposium on Distributed Computing, {DISC} 2024,
                  Madrid, Spain, October 28 - November 1, 2024},
  series       = {LIPIcs},
  volume       = {319},
  pages        = {17:1--17:18},
  publisher    = {Schloss Dagstuhl - Leibniz-Zentrum f{\"{u}}r Informatik},
  year         = {2024},
  url          = {https://doi.org/10.4230/LIPIcs.DISC.2024.17},
  doi          = {10.4230/LIPICS.DISC.2024.17},
  bibsource    = {dblp computer science bibliography, https://dblp.org}
}
